\documentclass[preprint,12pt]{elsarticle}

\usepackage{amsfonts}
\usepackage{amssymb}
\usepackage{amsmath}
\usepackage{amsthm}
\usepackage[letterpaper]{geometry}
\usepackage[parfill]{parskip}
\usepackage{color}
\usepackage{url}
\usepackage{graphicx}
\usepackage{xspace}
\usepackage{natbib}
\usepackage[usenames,dvipsnames]{xcolor}

\usepackage[ruled,vlined,english,linesnumbered]{algorithm2e}

\newcommand{\leo}{}
\newcommand{\cs}{}
\newcommand{\csN}{}

% old reviews
\newcommand{\revA}{} % reviewer 1
\newcommand{\revB}{} % reviewer 2
\newcommand{\revC}{} % reviewer 3
\newcommand{\revX}{} % any additional changes

% new reviews
\newcommand{\revAnew}{}
\newcommand{\revBnew}{}

\newcommand{\Np}{\mathbb{N}^{+}}
\newcommand{\Dp}{\Delta^{+}}
\newcommand{\NN}{\mathbb{N}}
\newcommand{\cT}{\mathcal{T}}
\newcommand{\HN}{{\sc Hybridization Number}\xspace}
\newtheorem{observation}{Observation}
\newtheorem{claim}{Claim}
\newtheorem{lemma}{Lemma}
\newtheorem{definition}{Definition}
\newtheorem{theorem}{Theorem}
\newtheorem{corollary}{Corollary}

\journal{Journal of Computer and System Sciences}

\begin{document}

\begin{frontmatter}

%% Title, authors and addresses

%% use the tnoteref command within \title for footnotes;
%% use the tnotetext command for theassociated footnote;
%% use the fnref command within \author or \address for footnotes;
%% use the fntext command for theassociated footnote;
%% use the corref command within \author for corresponding author footnotes;
%% use the cortext command for theassociated footnote;
%% use the ead command for the email address,
%% and the form \ead[url] for the home page:
%% \title{Title\tnoteref{label1}}
%% \tnotetext[label1]{}
%% \author{Name\corref{cor1}\fnref{label2}}
%% \ead{email address}
%% \ead[url]{home page}
%% \fntext[label2]{}
%% \cortext[cor1]{}
%% \address{Address\fnref{label3}}
%% \fntext[label3]{}
%% use optional labels to link authors explicitly to addresses:
%% \author[label1,label2]{}
%% \address[label1]{}
%% \address[label2]{}

\title{Kernelizations for the hybridization number problem\\ on multiple nonbinary trees\tnoteref{prelim}}

\tnotetext[prelim]{A preliminary version of this article appeared in the proceedings of Workshop on Graph-Theoretic Concepts \revB{in Computer Science} (WG 2014).}

\author[TUD]{Leo van Iersel\fnref{veni}}\ead{l.j.j.v.iersel@gmail.com}
\author[DKE]{Steven Kelk}\ead{steven.kelk@maastrichtuniversity.nl}
\author[ISEM]{Celine Scornavacca}\ead{celine.scornavacca@univ-montp2.fr}

\fntext[veni]{Leo van Iersel was \revX{partially} funded by a Veni grant from The Netherlands Organisation for Scientific Research (NWO).}

\address[TUD]{\revX{Delft Institute of Applied Mathematics, Delft University of Technology, P.O. Box 5, 2600 AA Delft, The Netherlands}}
\address[DKE]{Department of Knowledge Engineering (DKE), Maastricht University, P.O. Box 616, 6200 MD Maastricht, The Netherlands}
\address[ISEM]{ISEM, CNRS -- Universit\'e Montpellier II, Place Eug\`ene Bataillon, 34095, Montpellier, France }

\begin{abstract}
Given a finite set $X$, a collection $\cT$ of rooted phylogenetic trees on $X$ and an integer~$k$, the \HN problem asks if there exists a phylogenetic network on~$X$ that displays all trees from~$\cT$ and has reticulation number at most~$k$. We show two kernelization algorithms for \HN, with kernel sizes $4k(5k)^t$ and $20k^2(\Delta^+-1)$ respectively, with~$t$ the number of input trees and~$\Delta^+$ their maximum outdegree. \revX{Experiments on simulated data demonstrate the practical relevance of our kernelization algorithms.} In addition, we present an $\revX{n^{f(k)}t}$\revB{-}time algorithm, with $n=|X|$ and $f$ some computable function of~$k$.
%A well-studied problem in phylogenetics is to determine the minimum number of hybridization events necessary to explain conflicts among several evolutionary trees, \cs{obtained for example} from different genes. An evolutionary history with hybridization events (or, more generally, reticulations) can be described by a rooted leaf-labelled directed acyclic graph, which is called a phylogenetic network. The reticulation number of such a phylogenetic network can be defined as the number of edges \cs{that need to be removed in order to obtain a tree}. Given a finite set~$X$, a collection~$\cT$ of rooted phylogenetic trees on~$X$ and~$k\in\Np$, the \HN problem asks if there exists a rooted phylogenetic network on~$X$ that displays all trees from~$\cT$ and has reticulation number at most~$k$. We show that \HN admits a kernel of size~$4k(5k)^t$ if~$\cT$ contains~$t$ (not necessarily binary) rooted phylogenetic trees. In addition, we show a slightly different kernel of size~$20k^2(\Delta^+-1)$ with~$\Delta^+$ the maximum outdegree \leo{of the trees in~$\cT$}. \leo{Finally, we present an~$O(n^{f(k)})$ time algorithm, with~$n=|X|$ and~$f$ some computable function of~$k$.}
% that can be used to solve kernelized instances. (did not fit)
\end{abstract}

\begin{keyword}
Fixed-parameter tractability\sep kernelization\sep phylogenetic tree\sep phylogenetic network\sep hybridization number
%% keywords here, in the form: keyword \sep keyword

%% PACS codes here, in the form: \PACS code \sep code

%% MSC codes here, in the form: \MSC code \sep code
%% or \MSC[2008] code \sep code (2000 is the default)

\end{keyword}

\end{frontmatter}

%\maketitle

\section{Introduction} 
In phylogenetics, one central challenge is to construct a plausible evolutionary history for a set of contemporary species~$X$ given incomplete data. This usually concerns biological evolution, but the paradigm is equally applicable to more abstract forms of evolution, e.g. natural languages~\cite{NakhlehLanguages}. Classically an evolutionary history is modelled by a \emph{rooted phylogenetic tree}, essentially a rooted tree in which the leaves are bijectively labelled by~$X$~\cite{SempleSteel2003}. In recent years, however, there has been growing interest in generalizing this model to directed acyclic graphs, \revB{that is,} to \emph{rooted phylogenetic networks} \cite{expanding,HusonRuppScornavacca10,davidbook}. In the latter model, \emph{reticulations}, \cs{which are vertices of indegree~2 or higher,} are of central importance; these are used to represent non-treelike evolutionary phenomena such as hybridization and lateral gene transfer. The \emph{reticulation number} of a phylogenetic network can be defined as the number of edges \cs{that need to be removed in order to obtain a tree}. It is easy to see that in networks with maximum indegree~2 (to which we will be able to restrict without loss of generality) the reticulation number is simply equal to the number of reticulations. This \leo{setting} has naturally given rise to the \textsc{Hybridization Number} problem: given a set of \leo{rooted} phylogenetic trees $\cT$ on the same set of taxa~$X$, construct a \leo{rooted} phylogenetic network on~$X$ with \cs{the} smallest possible reticulation number, such that an image of every tree in~$\cT$ is embedded in the network~\cite{BaroniEtAl2005}.

\HN has attracted considerable interest in a short space of time. Even in the case when~$\cT$ consists of two binary (\revB{that is,} bifurcating) trees the problem is $\textsf{NP}$-hard, $\textsf{APX}$-hard~\cite{bordewich07a} and in terms of approximability is a surprisingly close relative of the problem \textsc{Directed Feedback Vertex Set}~\cite{cyclekiller,nonbinary}. On the positive side, this variant of the problem is fixed-parameter tractable ($\textsf{FPT}$) in parameter~$k$, the \leo{reticulation number of an optimal network}. Initially this was established via kernelization~\cite{sempbordfpt2007}, but more recently efficient bounded-search algorithms have emerged with \leo{$O( 3.18^k \cdot \text{poly}(n))$ being the current state of the art~\cite{whidden2013fixed}, with~$n=|X|$}.

\clearpage

In this article we focus on the general case when~$t=|\cT| \geq 2$ and the trees
in~$\cT$ are \cs{not necessarily binary}. This causes complications for two reasons. First, when~$t > 2$, the popular ``maximum acyclic agreement forest'' abstraction breaks down, a central pillar of algorithms for the~$t=2$ case. Second, in the nonbinary case the images of the trees in the network are allowed to be more ``resolved'' than the original trees. (More formally, an input tree~$T$ is seen as being embedded in a network~$N$ if~$T$ can be obtained from a subgraph of~$N$ by contracting edges.) The reason for this is that vertices with outdegree greater than two are used by biologists to model uncertainty in the order that species diverged. Both factors complicate matters considerably. Consequently, progress has been more gradual.

For the case of multiple binary trees, there exists a kernel \revB{with at most $20k^2$ leaves}~\cite{vanIerselLinz}, various heuristics~\cite{chen2012algorithms,chen2013ultrafast,pirnISMB2010} and an exact approach without running-time bound~\cite{wu2013algorithm}.

For the case of two nonbinary trees, there is also a polynomial kernel~\cite{linzsemple2009}, based on a highly technical kernelization argument, and a simpler $\textsf{FPT}$ algorithm based on bounded search~\cite{simplefpt}.

This leaves the \leo{general} case of \leo{multiple} nonbinary trees as the main variant for which it is unclear whether the problem is $\textsf{FPT}$. \leo{The most obvious parameter choice is, as before, the reticulation number~$k$. However, other natural parameters in this case are the number of input trees~$t$ and the maximum outdegree~$\Dp$ over all input trees. By the NP-hardness result mentioned above, it is clear that \HN is not FPT if the parameter is~$t$ or~$\Dp$, unless $\textsf{P}$ $=$ $\textsf{NP}$. Therefore, the most interesting questions are whether the problem is FPT if either (a) the only parameter is~$k$, or (b) there are two parameters:~$k$ and~$t$, or (c) there are two parameters:~$k$ and~$\Dp$.}

In this paper, we answer the latter two questions \revX{affirmatively}, using a kernelization approach. First, we prove that \HN admits a kernel with at most~$4k(5k)^t$ leaves. Second, we show a slightly different kernel with at most~$20k^2(\Dp-1)$ leaves. The running time of both kernelization algorithms is polyomial in~$n$ and~$t$. Whether \HN remains FPT if~$k$ is the only parameter remains open. However, we do present an algorithm for \HN that runs in $\revX{n^{f(k)}t}$ time, with~$f$ some computable function of~$k$, hence showing that the problem is in the class $\textsf{XP}$.

\leo{Similar results can alternatively be obtained using bounded-search algorithms instead of kernelization, see the e-print~\cite{towards}. We do not include those algorithms here because the proofs (although based on several important insights) are highly technical and the running times astronomical. In contrast, the kernelization algorithms are simple, fast and their proofs relatively elegant. Therefore, we only include the last result of our e-print~\cite{towards}, which is the $\revX{O(n^{f(k)}t)}$ time algorithm, in this paper. Its running time is also astronomical but, combined with the kernelization algorithms, it gives explicit FPT algorithms, which are (theoretically) the best known algorithms that can solve general instances of \HN.}

\leo{Some of the lemmas that we prove in order to derive the correctness of the kernelization algorithms are of \revA{independent} interest because they improve our understanding of how nonbinary trees can be embedded inside networks. This helps us to avoid a technical case analysis (as in~\cite{linzsemple2009}) and exhaustive guessing (as in~\cite{towards}), leading to a simple and unified kernelization approach that is applicable to a more general problem (compared to e.g.~\cite{linzsemple2009,vanIerselLinz}).}

\leo{Moreover, the $4k(5k)^t$ kernel introduces an interesting way to deal with multiple parameters simultaneously. It is based on searching, for decreasing~$q$, for certain substructures called ``$q$-star chains", which are chains that are common to all~$t$ input trees and form stars in~$q$ of the input trees. When we encounter such substructures we truncate them to a size that is a function of~$q$ and~$k$. Since we loop through all possible values of~$q$ \cs{($0\leq q\leq t$)}, we eventually truncate all common substructures. The correctness of each step heavily relies on the fact that substructures for larger values of~$q$ have already been truncated. However, when~$q$ decreases, the size to which substructures can be reduced increases \cs{(as will become clear later)}. This has the effect that the size of kernelized instances is a function of~$k$ and~$t$ \cs{and not of~$k$ only}. For the $20k^2(\Dp-1)$ kernel, we use a similar but simpler technique.}

\leo{From our results it follows that \HN admits a polynomial-size kernel in the case that either the number of input trees or their maximum outdegrees are bounded by a constant. Moreover, the kernelization algorithms run in polynomial time for general instances, with an unbounded number of trees with unbounded outdegrees. The main remaining open problem is to determine whether \HN remains fixed-parameter tractable if the input consists of an unbounded number of trees with unbounded outdegrees and the only parameter is the reticulation number~$k$.}

\revX{Finally, to demonstrate the practical relevance of the kernelization algorithms presented in this article, we have implemented them in Java and studied their performance under a variety of experimental parameters. Our experiments show that for large trees (500-1000 taxa) the kernelizations run quickly %, terminating in at most several minutes, 
and in many cases a reduction in instance size of 90$\%$ or more is achieved. The experiments also yield insight into the conditions under which the different kernelization algorithms do and do not effectively reduce the size of instances. The code, which combines all the kernelization algorithms into a single package, is
freely available at\\ \url{http://leovaniersel.wordpress.com/software/treeduce/}.}

\section{Preliminaries}
Let~$X$ be a finite set. A \emph{rooted phylogenetic} $X$-\emph{tree} is a rooted tree with no vertices with indegree~1 and outdegree~1, a root with indegree~0 and outdegree at least~2, and leaves bijectively labelled by the elements of~$X$. We identify each leaf with its label. We henceforth call a rooted phylogenetic $X$-tree a \emph{tree} (\emph{on}~\revAnew{$X$}) for short. A tree~$T$ is a \emph{refinement} of a tree~$T'$ if~$T'$ can be obtained from~$T$ by contracting edges.

\begin{figure}[t]
    \centering
    \includegraphics[scale=.35]{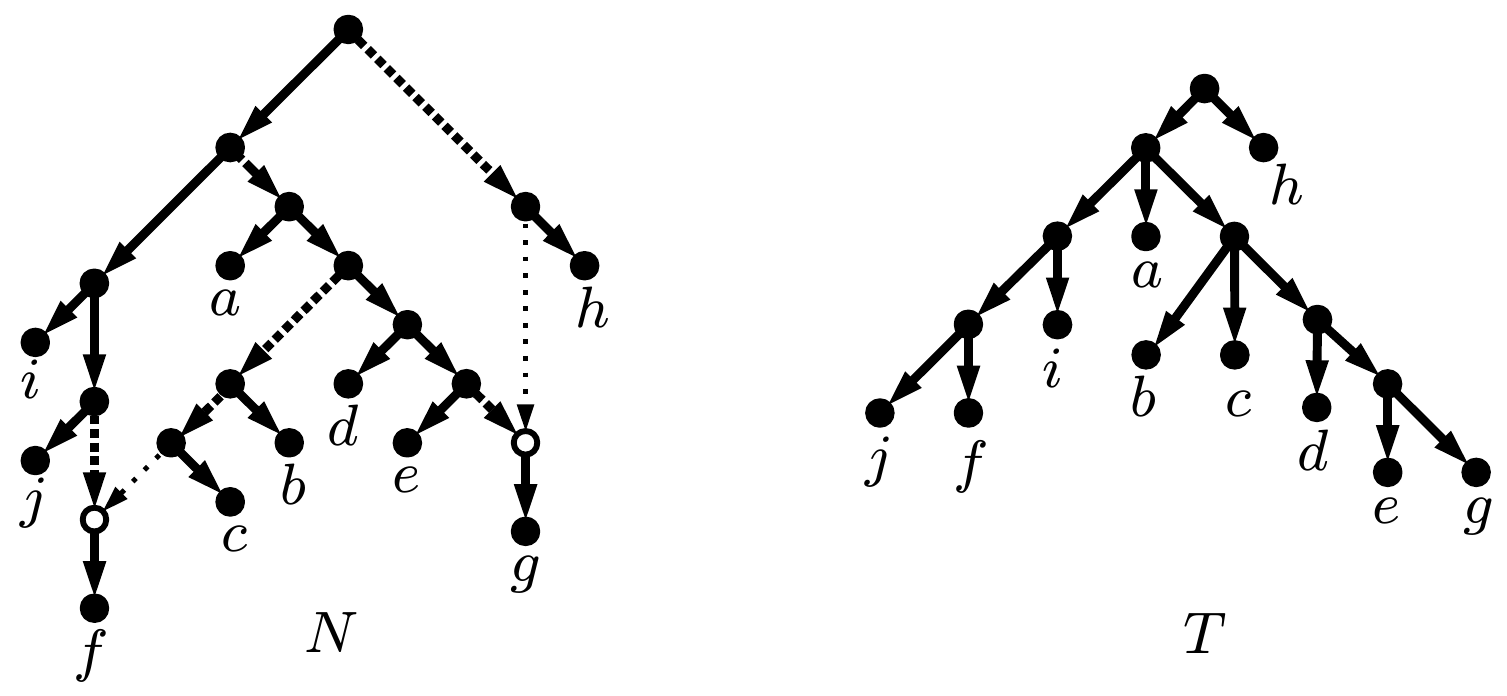}
    \caption{A (rooted phylogenetic) network~$N$ and a (rooted phylogenetic) tree~$T$. Network~$N$ is binary, has two reticulations (unfilled) and reticulation number~2. Tree~$T$ is displayed by~$N$ because it can be obtained from~$N$ by deleting the \cs{dotted} edges and contracting the \cs{dashed} edges.\label{fig:intro}}
\end{figure}

Throughout the paper, we refer to directed edges simply as edges. If~$e=(u,v)$ is an edge, then we say that~$v$ is a \emph{child} of~$u$, that~$u$ is a \emph{parent} of~$v$, that~$v$ is the \emph{head} of~$e$ \leo{and that~$u$ is the \emph{tail} of~$e$.}

A \emph{rooted phylogenetic network} (on~$X$) is a directed acyclic graph with no vertices with indegree~1 and outdegree~1\cs{, a single indegree-0 vertex (the \emph{root})}, and leaves \revA{(vertices with outdegree~0)} bijectively labelled by the elements of~$X$. Rooted phylogenetic networks will henceforth be called \emph{networks} for short in this paper. A tree~$T$ is \emph{displayed} by a network~$N$ if~$T$ can be obtained from a subgraph of~$N$ by contracting edges. See Figure~\ref{fig:intro} for an example. \leo{Note that, without loss of generality, we may assume that edges incident to leaves are not contracted.} Using~$d^-(v)$ to denote the indegree of a vertex~$v$, a \emph{reticulation} is a vertex~$v$ with~$d^-(v)\geq 2$. The \emph{reticulation number} of a network~$N$ with vertex set~$V$ \leo{and edge set~$E$ is defined as $r(N)=|E|-|V|+1$ or, equivalently, as}

\[
r(N)=\sum_{v\in V : d^-(v)\geq 2}(d^-(v)-1).
\]

Given a set of trees~$\cT$ on~$X$, we use $r(\cT)$ to denote the minimum value of~$r(N)$ over all \revA{networks}~$N$ on~$X$ that display~$\cT$. We are now ready to formally define the problem we consider.

\noindent{\bf Problem:} \HN\\
\noindent {\bf Instance:} A finite set~$X$, a collection~$\cT$ of \revA{trees} on~$X$ and~$k\in\Np$. \\
\noindent {\bf Question:} Is $r(\cT)\leq k$, \revB{that is}, does there exist a \revA{network}~$N$ on~$X$ that displays~$\cT$ and has~$r(N)\leq k$.

A network is called \emph{binary} if each vertex has indegree and outdegree at most~2 and if each vertex with indegree~2 has outdegree~1. By the following \leo{observation} we may restrict to binary networks.
\medskip
\begin{observation}[\cite{towards}]\label{obs:binary}
If there exists a network~$N$ on~$X$ that displays~$\cT$ then there exists a binary network~$N'$ on~$X$ that displays~$\cT$ such that~$r(N)=r(N')$.
\end{observation}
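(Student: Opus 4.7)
The plan is to give an explicit binarization procedure that transforms an arbitrary network $N$ into a binary $N'$ satisfying the two required properties, and then to verify (a) that $r(N')=r(N)$ and (b) that $N'$ still displays every tree in $\mathcal{T}$.

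\textbf{Step 1: local gadget replacements.} I would process the vertices of $N$ one at a time. Consider a vertex $v$ with indegree $d^-(v)=d$ and outdegree $d^+(v)=c$. If $d\ge 2$ and $c\ge 2$, first split $v$ into two vertices $v_{\text{in}}$ and $v_{\text{out}}$ connected by a new edge $(v_{\text{in}},v_{\text{out}})$, where $v_{\text{in}}$ inherits the $d$ incoming edges of $v$ and $v_{\text{out}}$ inherits the $c$ outgoing edges. This reduces to two separate cases. For a pure reticulation (indegree $d\ge 3$, outdegree $1$), replace $v$ by a path of $d-1$ new vertices $r_1,\dots,r_{d-1}$ such that each $r_i$ has indegree $2$ and outdegree $1$: feed two of the original incoming edges into $r_1$, then let $r_i$ receive one of the remaining original incoming edges together with the edge from $r_{i-1}$, and finally route the single outgoing edge from $r_{d-1}$. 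For a pure tree vertex (indegree $1$, outdegree $c\ge 3$), replace it by any arbitrary binary rooted tree on its $c$ children. Applying these replacements exhaustively yields a binary network $N'$.

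\textbf{Step 2: reticulation number is preserved.} By the formula $r(N)=\sum_{v:\,d^-(v)\ge 2}(d^-(v)-1)$, each vertex $v$ with indegree $d\ge 2$ contributes $d-1$ to $r(N)$. In the gadget replacing such a $v$, we introduced exactly $d-1$ new reticulations, each with indegree $2$ (contribution $1$ each), and the split edge $(v_{\text{in}},v_{\text{out}})$ plus the tree resolution introduce only indegree-$1$ vertices. Hence the total reticulation contribution of the gadget equals $d-1$, matching $v$'s original contribution. Summing over all vertices yields $r(N')=r(N)$.

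\textbf{Step 3: $N'$ displays $\mathcal{T}$.} The crucial observation is that $N$ is obtained from $N'$ by contracting precisely the new edges introduced inside each gadget (the split edge $(v_{\text{in}},v_{\text{out}})$, the chain edges between the $r_i$, and the internal edges of the arbitrary binary resolution). Thus $N'$ is a refinement of $N$. If $T\in\mathcal{T}$ is displayed by $N$, there is a subgraph $S\subseteq N$ from which $T$ is obtained by edge contractions. Let $S'\subseteq N'$ be the preimage of $S$ under the contraction (i.e., take $S$ together with all new gadget edges whose contraction produced the vertices and edges of $S$). Then contracting first the new gadget edges and then the edges that transformed $S$ into $T$ shows that $T$ is obtained from $S'$ by a sequence of edge contractions, so $N'$ displays $T$.

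\textbf{Anticipated obstacle.} The only subtle point is making the gadget description rigorous enough that both the accounting in Step 2 and the pre-image construction in Step 3 are unambiguous, especially for a vertex that is simultaneously a reticulation and has large outdegree. Handling this via the preliminary $(v_{\text{in}},v_{\text{out}})$-split cleanly decouples the two cases, so the rest of the argument proceeds by simple bookkeeping.
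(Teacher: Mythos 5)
Your proof is correct and follows essentially the same route as the paper, which disposes of the observation in one sentence by noting that every network $N$ admits a binary refinement $N'$ with $r(N')=r(N)$ from which $N$ is recovered by contracting edges, so anything displayed by $N$ is displayed by $N'$. Your gadget construction, the $(d-1)$-reticulation accounting, and the preimage argument are just an explicit spelling-out of that same idea.
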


The observation follows directly from noting that, for each network~$N$, there exists a binary network~$N'$ with~$r(N')=r(N)$ such that~$N$ can be obtained from~$N'$ by contracting edges. Hence, any tree displayed by~$N$ is also displayed by~$N'$.

\revA{A subgraph~$T'$ of a network~$N$ (which may be a tree) is said to be a \emph{pendant subtree} if \csN{$T'$ does not contain reticulations and if} there is no non-root vertex of~$T'$ \revAnew{that} has a child or parent in~$N$ that is not in~$T'$. Note that a pendant subtree of a network on~$X$ is a tree on~$Y$ with~$Y\subseteq X$ and~$Y\neq\emptyset$. If~$|Y|=1$ then the subtree is called \emph{trivial}.} 

\revA{We use~$p_N(v)$ to denote the \csN{set of parents}
%parent 
of a vertex~$v$ in a network~$N$ (which may be a tree). If~$x$ and~$y$ are leaves of~$N$, then we say that~$x$ is \emph{above}~$y$ in~$N$ if~$N$ contains a directed path from \csN{a node in}~$p_N(x)$ to~$y$. If, in addition, $p_N(x)\neq p_N(y)$, we say that~$x$ is \emph{strictly above}~$y$. Observe that two leaves~$x$ and~$y$ have a common parent in~$N$ if and only if~$x$ is above~$y$ and~$y$ is above~$x$.}

\revB{\emph{Suppressing} a vertex~$v$ with indegree~1 and outdegree~1 means adding an edge from the parent of~$v$ to the child of~$v$ and subsequently deleting~$v$.}

\begin{figure}[t]
    \centering
    \includegraphics[scale=.35]{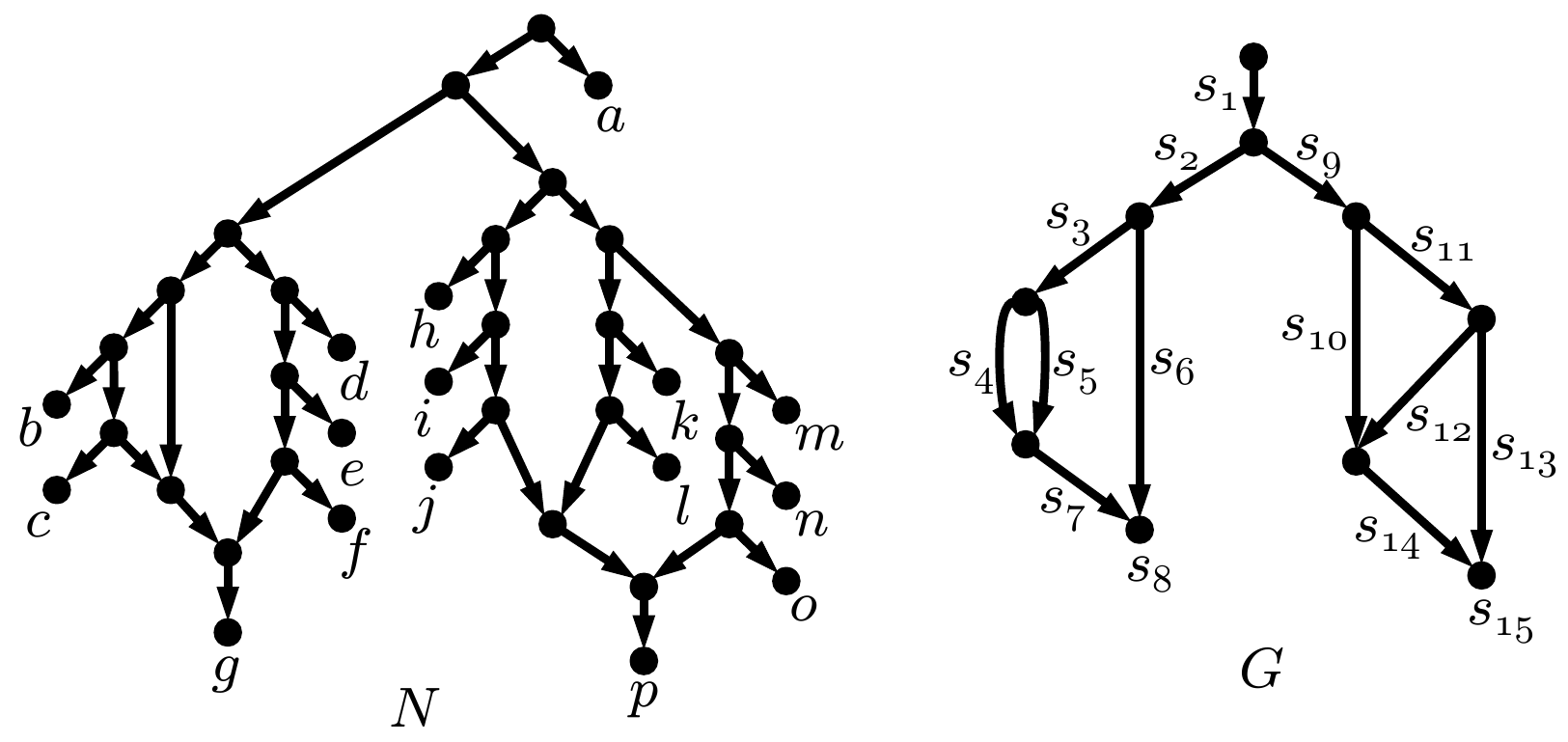}
    \caption{A network~$N$ and the 4-reticulation generator~$G$ underlying~$N$. Generator~$G$ has two vertex sides~$s_8$ and~$s_{15}$ and 13 edge sides. For example, leaves~$d,e$ and~$f$ are on edge side~$s_6$ and leaf~$g$ is on vertex side~$s_8$.\label{fig:generator}}
\end{figure}

The notion of ``generators'' is used to describe the underlying structure of a network without nontrivial pendant subtrees~\cite{KelkScornavacca2011}. Let~$k\in\Np$. A \emph{binary $k$-reticulation generator} is defined as an acyclic directed multigraph with a single root with indegree~0 and outdegree~1, exactly~$k$ vertices with indegree~2 and outdegree at most~1, and all other vertices have indegree~1 and outdegree~2. See Figure~\ref{fig:generator} for an example. Let~$N$ be a binary network with no nontrivial pendant subtrees and with~$r(N)=k$. Then, a binary $k$-reticulation generator is said to be the \emph{generator underlying}~$N$ if it can be obtained from~$N$ by adding a new root with an edge to the old root, deleting all leaves and suppressing all resulting indegree-1 outdegree-1 vertices. In the other direction,~$N$ can be reconstructed from its underlying generator by subdividing edges, adjoining a leaf to each vertex that subdivides an edge, or has indegree~2 and outdegree~0, via a new edge, and deleting the outdegree-1 root. The \emph{sides} of a generator are its edges (the \emph{edge sides}) and its vertices with indegree~2 and outdegree~0 (the \emph{vertex sides}). Thus, each leaf of~$N$ is on a certain side of its underlying generator. To formalize this, consider a leaf~$x$ of a binary network~$N$ without nontrivial pendant subtrees and with underlying generator~$G$. If the parent~$p$ of~$x$ has indegree~2, then~$p$ is a vertex side of~$G$ and we say that~$x$ \emph{is on side}~$p$. If, on the other hand, the parent~$p$ of~$x$ has indegree~1 and outdegree~2, then~$p$ is used to subdivide an edge side~$e$ of~$G$ and we say that~$x$ \emph{is on side}~$e$. We say that two \revA{leaves~$x$ and~$y$ are} \emph{on the same side} of~$N$ if the underlying generator of~$N$ has an edge side~$e$ such that~$x$ and~$y$ are both on side~$e$. The following lemma will be useful.
\medskip
\begin{lemma}[\cite{vanIerselLinz}]\label{lem:sides}
If~$N$ is a binary \revA{network} with no nontrivial pendant subtrees and with $r(N)=k>0$ and if~$G$ is its underlying generator, then~$G$ has at most~$4k-1$ edge sides, at most~$k$ vertex sides and at most~$5k-1$ sides in total.
\end{lemma}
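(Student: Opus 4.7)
The plan is to exploit the rigid degree structure baked into the definition of a binary $k$-reticulation generator and read off all three bounds from a single handshake identity.

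First I would classify the vertices of $G$. By definition, $G$ has a unique root $\rho$ (indegree $0$, outdegree $1$), exactly $k$ reticulation vertices (indegree $2$, outdegree at most $1$), and some number $n_t$ of remaining \emph{tree vertices}, each of indegree $1$ and outdegree $2$. A vertex side is a reticulation with outdegree $0$; let $v_s$ denote the number of vertex sides. Then trivially $v_s \leq k$, since there are only $k$ reticulations in total. This gives the second bound for free.

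Second I would compute the number of edges $|E(G)|$ in two ways. Summing indegrees over all vertices yields
\[
|E(G)| \;=\; 0 \;+\; 2k \;+\; n_t,
\]
while summing outdegrees yields
\[
|E(G)| \;=\; 1 \;+\; (k - v_s) \;+\; 2n_t.
\]
Equating the two expressions gives the identity $n_t = k - 1 + v_s$, and combining this with $v_s \leq k$ forces $n_t \leq 2k-1$. Since the edge sides of $G$ are exactly its edges, the number of edge sides equals $|E(G)| = n_t + 2k \leq 4k - 1$, which is the first bound. The total number of sides is then $|E(G)| + v_s \leq (4k-1) + k = 5k - 1$, giving the third.

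There is essentially no obstacle here: once the degree constraints in the definition of a binary $k$-reticulation generator are in hand, everything reduces to an elementary handshake argument, and all three bounds are instances of the single linear relation $n_t = k - 1 + v_s$. The only conceptual points worth emphasising are that the vertex-side bound is immediate from the reticulation count, and that the crucial ingredient enabling the edge-side bound — namely that every non-reticulation internal vertex of $G$ has outdegree exactly $2$ — is precisely what the suppression step in the construction of $G$ from $N$, together with the assumption that $N$ has no nontrivial pendant subtrees, guarantees.
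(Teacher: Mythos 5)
Your proof is correct: the classification of the generator's vertices into the root, the $k$ reticulations (of which the $v_s$ outdegree-$0$ ones are exactly the vertex sides), and the indegree-$1$ outdegree-$2$ vertices, followed by counting $|E(G)|$ via indegrees and via outdegrees to get $n_t = k-1+v_s \leq 2k-1$ and hence $|E(G)| = 2k+n_t \leq 4k-1$, is precisely the standard degree-counting argument; the paper itself only cites this lemma from \cite{vanIerselLinz}, where essentially the same handshake computation is used.
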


A \emph{kernelization} of a parameterized problem is a polynomial-time algorithm that maps an instance~\cs{$I$} with parameter~$k$ to an instance~\cs{$I'$} with parameter~$k'$ such that (1)~$(I',k')$ is a yes-instance if and only if~$(I,k)$ is a yes-instance, (2) the size of~$I'$ is bounded by a function~$f$ of~$k$, and (3) the size of~$k'$ is bounded by a function of~$k$~\cite{downey1999}. A kernelization is usually referred to as a \emph{kernel} and the function~$f$ as the \emph{size} of the kernel. Thus, a parameterized problem admits a polynomial kernel if there exists a kernelization with~$f$ being a polynomial. A parameterized problem is \emph{fixed-parameter tractable} ($\textsf{FPT}$) if there exists an algorithm that solves the problem in time $O(g(k)|I|^{O(1)})$, with~$g$ being some \revX{computable} function of~$k$ and~$|I|$ the size of~$I$. It is well known that a parameterized problem is fixed-parameter tractable if and only if it admits a kernelization and is decidable. However, there exist fixed-parameter tractable problems that do not admit a kernel of \emph{polynomial} size unless the polynomial hierarchy collapses~\cite{bodlaender}. Kernels are of practical interest because they can be used as polynomial-time preprocessing which can be combined with any algorithm solving the problem (\leo{e.g. an exponential-time exact algorithm or a heuristic}). \revB{The class $\textsf{XP}$ contains all parameterized problems that can be solved in $n^{h(k)}$ time, with~$h$ a computable function of the parameter~$k$.} 

\section{A polynomial kernel for a bounded number of trees}\label{sec:boundedtrees}

We first introduce the following key definitions. Let~$\cT$ be a set of trees. A tree~\cs{$S$} is said to be a \emph{common pendant subtree} of~$\cT$ if it is a refinement of a pendant subtree of each~$T\in\cT$ and~$S$ is said to be \emph{nontrivial} if it has at least two leaves.

\medskip

\revB{The kernelization is described in Algorithm~\ref{alg:kernel}. We will give the definition of (common $q$-star) chains after Lemmas~\ref{lem:subtreereduction} and~\ref{lem:polysubtree}, which show that the subtree reduction} preserves the reticulation number and can be applied in polynomial time. \revB{Their proofs} use the following definition. \emph{Cleaning up} a directed graph means repeatedly deleting unlabelled outdegree-0 vertices and indegree-0 out\-degree-1 vertices, suppressing indegree-1 out\-degree-1 vertices and replacing multiple edges by single edges until none of these operations is applicable. Cleaning up is used to turn directed graphs into valid networks and it can easily be checked that the cleaning-up operation does not affect which trees are being displayed \revB{and its result does not depend on the order of the operations}.
\medskip
\begin{lemma}\label{lem:subtreereduction}
Let $(X,\cT,k)$ be an instance of \HN and let $(X',\cT',k)$ be the instance obtained after applying the subtree reduction for \cs{a} common pendant subtree~$S$. \revB{Then}~$r(\cT)\leq k$ if and only if~$r(\cT')\leq k$.
\end{lemma}
\begin{proof}
If~$r(\cT')\leq k$ then clearly also~$r(\cT)\leq k$ because in any network~$N'$ displaying~$\cT'$ we can simply replace leaf~$x$ by \cs{the} pendant subtree~$S$ to obtain a network~$N$ that displays~$\cT$ and has~$r(N)=r(N')$.

Now suppose that~$r(\cT)\leq k$, \revB{that is}, that there exists a \revA{network}~$N$ that displays~$\cT$ and has~$r(N)\leq k$. We construct a network~$N'$ displaying~$\cT'$ from~$N$ in the following way. Pick any leaf~$y$ of~$S$. Delete all leaves of~$S$ except for~$y$ from~$N$ and relabel~$y$ to~\revA{$x^\dagger$}. Let~$N'$ be the result of cleaning up the resulting digraph. It is easy to check that~$N'$ displays~$\cT'$ and that~$r(N')\leq r(N)$. Hence,~$r(\cT')\leq k$.
\end{proof}
\medskip
\begin{lemma}\label{lem:polysubtree}
Given a set~$\cT$ of trees on~$X$, there exists \leo{an $O(|X|^3|\cT|)$ time} algorithm that decides if there exists a nontrivial maximal common pendant subtree of~$\cT$ and constructs such a subtree if it exists.
\end{lemma}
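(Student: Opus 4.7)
The plan is to search for a candidate leaf set $L$ that could serve as the leaf set of a common pendant subtree, by walking down the internal vertices of one fixed input tree.

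Observe first that if $S$ is a common pendant subtree of $\cT$ with leaf set $L$, then $L$ must be the set of descendant leaves of some internal vertex in each tree of $\cT$ (because $S$ refines a pendant subtree on $L$ in every tree). Conversely, given any $L \subseteq X$ that is the descendant-leaf set of a vertex in every $T \in \cT$, a common pendant subtree on $L$ exists if and only if the $t$ pendant subtrees on $L$ admit a common refinement, which is equivalent to their combined cluster system forming a laminar family. So the problem reduces to finding a maximal such $L$ with $|L| \geq 2$, and then explicitly building $S$ as the Hasse diagram of the union of clusters.

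The algorithm fixes a tree $T_1 \in \cT$ and traverses its internal vertices in pre-order (root first). At each vertex $v$, let $L_v$ denote the set of leaves below $v$ in $T_1$. For each other tree $T_i$, compute the lowest common ancestor of $L_v$ and check whether its descendant-leaf set equals $L_v$; this decides whether $L_v$ is the leaf set of a pendant subtree of $T_i$. If this test passes for every $T_i$, check compatibility of the $t$ pendant subtrees on $L_v$ by forming the union of their cluster sets and verifying laminarity. If compatibility holds and $|L_v| \geq 2$, output $L_v$ together with the resulting common refinement $S$; otherwise recurse into the children of $v$. If the traversal terminates without success, no nontrivial common pendant subtree exists. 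Correctness follows from two observations: every common pendant subtree's leaf set equals $L_v$ for some internal vertex $v$ of $T_1$; and since $T_1$ is processed top-down, the returned $L_v$ is not strictly contained in the leaf set of any other common pendant subtree, because any such superset would correspond to an ancestor of $v$ in $T_1$, and the algorithm would already have rejected it. Hence the output is maximal.

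For the running time, $T_1$ has $O(|X|)$ internal vertices. At each vertex we (i) compute $L_v$ and perform the cluster test on the other $t-1$ trees in $O(|X| \cdot |\cT|)$ time, and (ii) perform the laminarity check on the $O(|X|)$ clusters across all $t$ trees in $O(|X|^2 \cdot |\cT|)$ time by naive pairwise comparison. Multiplying by $O(|X|)$ vertices gives the claimed $O(|X|^3 \cdot |\cT|)$ bound. The main subtlety that needs care is the compatibility check: even when $L_v$ is a common cluster across all trees, the pendant subtrees themselves need not be compatible (for example $((a,b),c)$ versus $((a,c),b)$), so omitting this check could lead to reporting an $L_v$ that does not correspond to any actual common pendant subtree $S$ and would break the subtree reduction of Lemma~\ref{lem:subtreereduction}.
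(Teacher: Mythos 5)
There is a genuine gap, and it is in your very first ``observation'': it is \emph{not} true that the leaf set $L$ of a common pendant subtree must be the full descendant-leaf set of a vertex in each tree. The paper's definition of a pendant subtree $T'$ explicitly exempts the root: ``no vertex of $T'$ \emph{other than possibly its root} has a child that is not in $T'$.'' So the root of a pendant subtree may have additional children outside the subtree, and consequently $L$ need only be a union of \emph{some} of the child-subtrees hanging below a vertex, not the whole cluster. A concrete counterexample to your algorithm: let $X=\{a,b,c,d,e\}$, let $T_1$ have a root with two children $u$ and $w$, where $u$ has children $a,b,e$ and $w$ has children $c,d$, and let $T_2$ have a root with two children, one with children $a,b$ and one with children $c,d,e$. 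The cherry on $\{a,b\}$ is a nontrivial common pendant subtree (in $T_1$ its root $u$ also has the child $e$, which the definition permits), and so is the cherry on $\{c,d\}$. Yet your algorithm, run with $T_1$ fixed, inspects only the clusters $\{a,b,c,d,e\}$, $\{a,b,e\}$ and $\{c,d\}$ of $T_1$: the first fails your laminarity test (the clusters $\{a,b,e\}$ of $T_1$ and $\{c,d,e\}$ of $T_2$ overlap without containment), and the latter two are not clusters of $T_2$ at all. Your algorithm therefore reports that no nontrivial common pendant subtree exists, which is false; this would in turn invalidate the kernel-size bound of Lemma~\ref{lem:kernelsize}, which relies on all nontrivial common pendant subtrees having been reduced away. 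The same issue undermines your maximality argument, since a larger common pendant subtree containing a returned $L_v$ need not be a cluster of $T_1$ either.

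The paper avoids this by working bottom-up rather than top-down: it first looks for two \emph{leaves} $x,y$ sharing a parent in every tree (if none exist, no nontrivial common pendant subtree exists -- this base case is correct precisely because of the root exemption above), then collapses $\{x,y\}$ into a single new leaf and recurses, assembling the maximal common pendant subtree from the recursive answer. If you want to salvage a top-down/cluster-based approach, you would have to enumerate candidate leaf sets that are unions of subsets of the children-subtrees of a vertex in every tree simultaneously, which is no longer a polynomial-size family to search naively; the bottom-up merging of sibling leaves is the clean way around this.
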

\begin{proof}
If there exist no two leaves that have a common parent in each tree in~$\cT$, then there are no nontrivial common pendant subtrees and we are done.

\begin{algorithm}[t]
{\textbf{Subtree Reduction:}
\If{there is a nontrivial maximal common pendant subtree~$S$ of~$\cT$}
{Let~\cs{$x^\dagger\notin X$}. In each~$T\in\cT$, if~$T'$ is the pendant subtree of~$T$ that~$S$ is a refinement of, replace~$T'$ by a single leaf labelled~$x^\dagger$. Remove the labels labelling leaves of~$S$ from~$X$ and add~$x^\dagger$ to~$X$.\\
\textbf{go to} Line 1}}
{\textbf{Chain Reduction:}
\For{$q=t-1,t-2,\ldots ,0$}
{
\If{there exists a maximal common $q$-star chain $(x_1,\ldots ,x_p)$ of~$\cT$ with~$p > (5k)^{t-q}$}
{Delete leaves $x_{(5k)^{t-q}+1},\ldots ,x_p$ from~$X$ and from each tree in~$\cT$ and repeatedly suppress outdegree-1 vertices and delete unlabelled outdegree-0 vertices until no such vertices remain.\\
\textbf{go to} Line 1}}
}
\caption{Kernelization algorithm for~$t:=|\cT|$ trees\label{alg:kernel}}
\end{algorithm}

Now assume that there exist leaves~$x,y$ that have a common parent in each tree in~$\cT$. We show how the common pendant subtree on~$x$ and~$y$ can be extended to a \emph{maximal} common pendant subtree of~$\cT$. Let~$\cT'$ be the result of modifying each tree in~$\cT$ by removing~$y$, suppressing the former parent of~$y$ if it gets outdegree~1 and relabelling~$x$ to~$z$ (with~$z\notin X$). Search, recursively, for a nontrivial maximal common pendant subtree of~$\cT'$. Let~$S'$ be such a subtree if it exists and, otherwise, let~$S'$ be the subtree consisting only of leaf~$z$. If~$S'$ does not contain~$z$, then let~$S:=S'$. If~$S'$ does contain~$z$, then let~$S$ be the result of adding two leaves~$x$ and~$y$ with edges~$(z,x)$ and~$(z,y)$ and removing the label of~$z$. Then,~$S$ is a nontrivial maximal common pendant subtree of~$\cT$.

\revB{Checking for each pair of taxa whether they have a common parent in each tree takes $O(|X|^2|\cT|)$ time. This has to be repeated at most $|X|$ times because one leaf is deleted in each iteration. Hence, the total running time is $O(|X|^3|\cT|)$.}
\end{proof}

\revB{We now turn to the chain reduction, and start by formally defining a chain.}
\medskip
\begin{definition}\label{def:chain}
If~$T$ is a tree on~$X$, $p\geq 2$ and~$x_1,\ldots ,x_p\in X$, then~$(x_1,\ldots ,x_p)$ is a \emph{chain} of~$T$ if:

\begin{enumerate}
\item[(1)] there exists a directed path $(v_1,...,v_\tau)$ in T, for some $\leo{\tau}\geq 1$;
\item[(2)] each $x_i$ is a child of some~$v_j$;
\item[(3)] if $x_i$ is a child of~$v_j$ and $i < p$, then $x_{i+1}$ is either a child
of $v_j$ or of $v_{j+1}$;
\item[(4)] for~$i\in\{2,\ldots ,\tau-1\}$, the children of~$v_i$ are all in~$\{v_{i+1},x_1,x_2,\ldots ,x_p\}$.
\end{enumerate}

If, in addition,~$\tau=1$ or the children of~$v_\tau$ are all in~$\{x_1,\ldots ,x_p\}$, then~$(x_1,\ldots ,x_p)$ is said to be a \emph{pendant chain} of~$T$. The \emph{length} of the chain is~$p$.
\end{definition}

\revB{Observe that $(3)$ can equivalently be replaced by
\begin{enumerate}
\item[(3')] $x_i$ is above $x_j$ in~$T$ whenever $i<j$.
\end{enumerate}}

\begin{figure}[t]
    \centering
    \includegraphics[scale=.35]{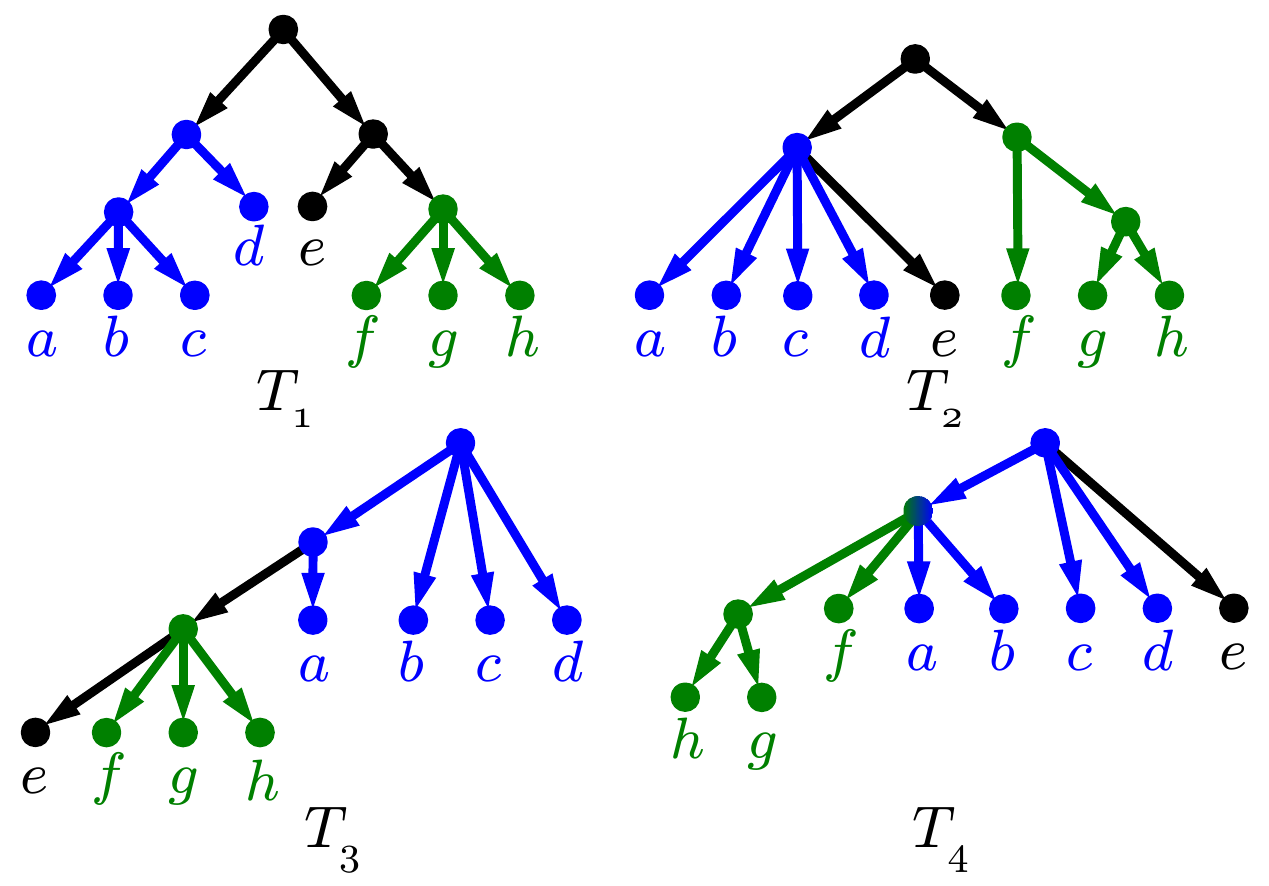}
    \caption{Example instance of \HN consisting of four trees that have a common pendant subtree on~$\{f,g,h\}$ and a common 1-star chain~$(d,c,b,a)$. Chain~$(d,c,b,a)$ is pendant in~$T_1$ and~$T_2$ \leo{(because $\tau=1$ in~$T_2$)} but not in~$T_3$ and~$T_4$. It is a 1-star chain because all its leaves have a common parent in only~$T_2$.\label{fig:chain}
    }
\end{figure}

A chain is said to be a \emph{common chain} of~$\cT$ if it is a chain of each tree in~$\cT$. The following observations follow easily from the definition of a chain.
\medskip
\begin{observation}\label{obs:subchain}
If~$(x_1,\ldots ,x_p)$ is a common chain of~$\cT$ and~$1\leq i < j \leq p$, then $(x_i,\ldots ,x_j)$ is a common chain of~$\cT$.
\end{observation}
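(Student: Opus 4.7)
The plan is to verify the sub-chain property one tree at a time, since $(x_i,\ldots ,x_j)$ being a common chain of $\cT$ just means it is a chain of each $T\in\cT$. For a fixed $T\in\cT$, let $(v_1,\ldots ,v_\tau)$ be a directed path witnessing that $(x_1,\ldots ,x_p)$ is a chain of $T$, and let $\pi(\ell)$ denote the unique index $m$ such that $x_\ell$ is a child of $v_m$ (well-defined by condition~(2) of Definition~\ref{def:chain}).

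The key preliminary observation is that condition~(3) of Definition~\ref{def:chain} forces $\pi$ to be non-decreasing, since it demands $\pi(\ell+1)\in\{\pi(\ell),\pi(\ell)+1\}$. Set $a=\pi(i)$ and $b=\pi(j)$; by monotonicity $a\le b$. I would propose the truncated path $(v_a,v_{a+1},\ldots ,v_b)$ as the witness for the sub-chain $(x_i,\ldots ,x_j)$ and check the four conditions of Definition~\ref{def:chain} against it. Conditions~(1) and~(3) transfer immediately: a sub-path of a directed path is a directed path, and the parent-successor inclusion in~(3) is inherited directly from the original chain. Condition~(2) follows from monotonicity, which yields $a\le\pi(\ell)\le b$ whenever $i\le\ell\le j$.

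The only mildly delicate step, and the main obstacle if there is one, is condition~(4). For a vertex $v_m$ internal to the truncated path (so $a<m<b$), the original condition~(4) tells us its children lie in $\{v_{m+1},x_1,\ldots ,x_p\}$, and I must rule out children of the form $x_\ell$ with $\ell<i$ or $\ell>j$. This is exactly where monotonicity is used in both directions: $\pi(\ell)=m>a=\pi(i)$ forces $\ell>i$ (by contrapositive of the monotonicity of $\pi$), while $\pi(\ell)=m<b=\pi(j)$ forces $\ell<j$, and hence $i\le\ell\le j$ as required. The edge cases $a=b$ (the sub-path has $\tau'=1$) and $b=a+1$ (the sub-path has no internal vertex) make condition~(4) vacuous and need no further argument. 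Thus the whole proof hinges on the single observation that $\pi$ is monotone, which is why the statement is labelled an observation rather than a lemma.
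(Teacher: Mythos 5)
Your proof is correct: the monotonicity of $\pi$ extracted from condition~(3) is exactly the right tool, and your case analysis for condition~(4) (ruling out children $x_\ell$ with $\ell<i$ or $\ell>j$ via the two contrapositives) is sound, as is the handling of the vacuous cases $a=b$ and $b=a+1$. The paper gives no proof at all --- it states that the observation ``follows easily from the definition'' --- and your argument is precisely the direct verification being alluded to, so there is nothing to reconcile.
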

\medskip
\begin{observation}\label{obs:commonparent}
If~$(x_1,\ldots ,x_p)$ is a chain of a tree~$T$,~$1\leq i < j \leq p$ and~$x_i$ and~$x_j$ have a common parent in~$T$, then~$x_i,\ldots ,x_j$ have a common parent in~$T$.
\end{observation}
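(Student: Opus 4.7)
The plan is to extract a monotonicity statement from the chain definition and combine it with uniqueness of parents in a tree. By property (1) of Definition~\ref{def:chain}, there is a directed path $(v_1,\ldots,v_\tau)$ in $T$, and by property (2), for each index $\ell\in\{1,\ldots,p\}$ there exists an index $f(\ell)\in\{1,\ldots,\tau\}$ such that $x_\ell$ is a child of $v_{f(\ell)}$. Since the $v_j$ are distinct vertices of the tree $T$ and $x_\ell$ has a unique parent, $f(\ell)$ is well-defined. Property (3) then says that if $x_\ell$ is a child of $v_{f(\ell)}$ and $\ell<p$, then $x_{\ell+1}$ is a child of either $v_{f(\ell)}$ or $v_{f(\ell)+1}$, which forces $f(\ell+1)\in\{f(\ell),f(\ell)+1\}$. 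In particular, $f$ is non-decreasing in $\ell$.

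Now suppose $x_i$ and $x_j$ share a common parent, with $i<j$. Because each $x_\ell$ has a unique parent in $T$, the hypothesis gives $v_{f(i)}=v_{f(j)}$, and distinctness of the $v_j$ along the path yields $f(i)=f(j)$. Combining this with monotonicity gives $f(i)=f(i{+}1)=\cdots=f(j)$, so every $x_\ell$ for $i\leq \ell\leq j$ is a child of the single vertex $v_{f(i)}$, which is exactly the claim.

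There is no significant obstacle: the statement is essentially an immediate unpacking of Definition~\ref{def:chain} once the ``position function'' $f$ is made explicit and shown to be non-decreasing. The only minor point to verify is that $f$ is well-defined (i.e.\ each $x_\ell$ sits above a unique $v_j$), which is automatic since the $v_j$ are distinct and $T$ is a tree.
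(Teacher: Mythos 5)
Your proof is correct and is precisely the ``easy'' unpacking of Definition~\ref{def:chain} that the paper has in mind (the paper states this observation without proof, noting only that it follows easily from the definition). Making the position function $f$ explicit and observing that property~(3) forces $f(\ell+1)\in\{f(\ell),f(\ell)+1\}$, hence monotonicity, is exactly the intended argument, and your well-definedness check is the right minor point to verify.
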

\medskip

\revB{It will turn out that chains are easier to deal with when they form a star in more input trees, which leads to the following definition.}

\medskip
\begin{definition}\label{def:qstar}
If~$\cT$ is a set of trees on~$X$ and~$x_1,\ldots ,x_p\in X$, then~$(x_1,\ldots ,x_p)$ is a \emph{common $q$-star chain} of~$\cT$ if:
\begin{enumerate}
\item[(a)] $(x_1,\ldots ,x_p)$ is a \leo{common chain of~$\cT$} and
\item[(b)] in precisely~$q$ trees of~$\cT$, all of~$x_1,\ldots ,x_p$ have a common parent.
\end{enumerate}
\end{definition}

We say that a common $q$-star chain $(x_1,\ldots ,x_p)$ of~$\cT$ is \emph{maximal} if there is no common $q$-star chain $(y_1,\ldots ,y_{p'})$ of~$\cT$ with $\{x_1,\ldots ,x_p\} \subsetneq \{y_1,\ldots ,y_{p'}\}$. Notice that a common 0-star chain is a common chain that does not form a star in any tree. An illustration of the above definitions is in Figure~\ref{fig:chain}.

To prove correctness of the chain reduction, we use two central lemmas (Lemmas~\ref{lem:starchain} and~\ref{lem:spanned} below). The idea of these lemmas is illustrated in Figure~\ref{fig:central}. The two trees~$T_1$ and~$T_2$ in this figure have a common chain~$(a,b,c,d,e)$. Both trees are displayed by network~$N$. However, the leaves of the chain are spread out over different sides of the underlying generator~$G$ of~$N$. \cs{The idea of the correctness proof for} the chain reduction \cs{is} to argue that there exists a modified network~$N'$ in which the leaves of the chain~$(a,b,c,d,e)$ all lie on the same side. Moreover, network~$N'$ should display all input trees and its reticulation number should not be higher than the reticulation number of~$N$.

In~$T_1$, all leaves of the chain have a common parent \cs{and hence the chain is pendant in~$T_1$}. For this case, Lemma~\ref{lem:starchain} argues that all leaves of the chain can be moved to any \cs{edge} side that contains at least one of its leaves, and the resulting network still displays~$T_1$.

In~$T_2$, there are two leaves~$x_i=d$ and~$x_j=e$ that are on the same side of~$G$ (the \revA{blue} side~$s_b$) and that do not have a common parent in~$T_2$. For this case, Lemma~\ref{lem:spanned} argues that all the leaves of the chain can be moved to side~$s_b$, and the resulting network will still display~$T_2$. (Note that we cannot move all the leaves of the chain to \cs{side~$s_r$}, even though it contains two leaves~$b,c$ of the chain, because~$b$ and~$c$ have a common parent in~$T_2$.)

Hence, the network~$N'$ obtained by moving all leaves of the chain to the blue side~$s_b$ displays both~$T_1$ and~$T_2$. Furthermore, $r(N')=r(N)=2$.

We now formally state and prove the two aforementioned central lemmas. \revB{These lemmas use the following definition. \emph{Regrafting} a chain $(x_1,\ldots ,x_p)$ \emph{above} an indegree-1 vertex $\hat{v}$ of a network~$N$ means deleting~$x_1,\ldots ,x_p$ from~$N$, subdividing the edge entering~$\hat{v}$ by a directed path~$v_1,\ldots ,v_{p}$ and adding the leaves~$x_1,\ldots ,x_p$ by edges~$(v_1,x_1),$ $\ldots ,(v_p,x_p)$.}

\begin{figure}[t]
    \centering
    \includegraphics[scale=.35]{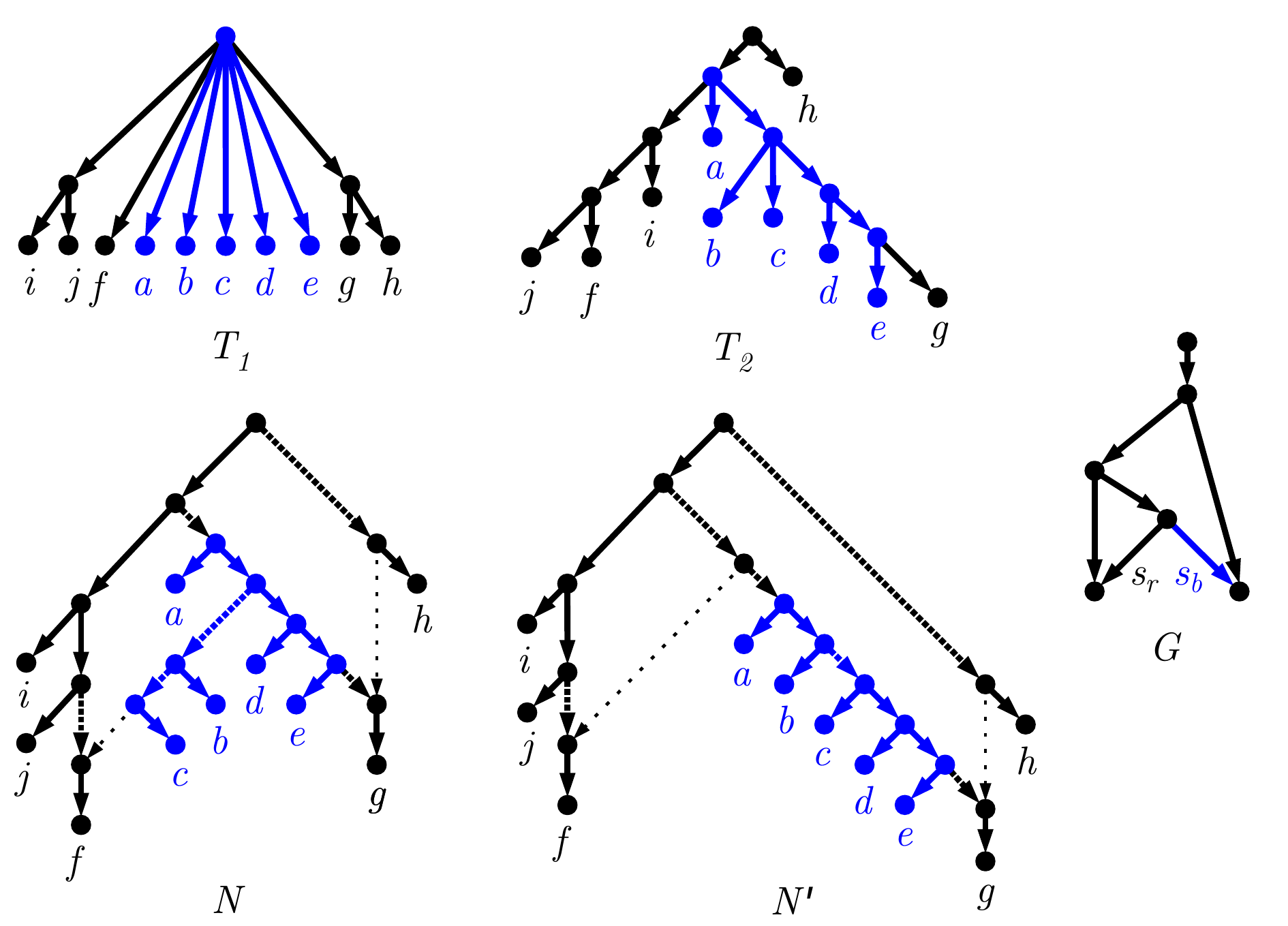}
    \caption{\cs{Two trees~$T_1$ and~$T_2$ with a common chain~$(a,b,c,d,e)$ highlighted in \revA{blue}, a network~$N$ that displays these trees, the network~$N'$ as constructed in Lemmas~\ref{lem:starchain} and~\ref{lem:spanned}, and the underlying generator~$G$ of both networks. Dashed and dotted edges are used to indicate that~$T_2$ can be obtained from either of~$N$ and~$N'$ by deleting the dotted edges and contracting the dashed edges.}\label{fig:central}
    }
\end{figure}
\medskip
\begin{lemma}\label{lem:starchain}
Let~$N$ be a binary \revA{network and} let~$T$ be a tree displayed by~$N$. Suppose that~$(x_1,\ldots ,x_p)$ is a \cs{pendant} chain of~$T$ and that~\revA{$\hat{v}$ is the parent of~$x_i$ in~$N$, with}~$i\in\{1,\ldots ,p\}$ and that~\revA{$\hat{v}$} has indegree~1. Let~$N'$ be the network obtained \revB{from~$N$ by regrafting $(x_1,\ldots ,x_p)$ above~$\hat{v}$ and cleaning up.} Then,~$N'$ displays~$T$.
\end{lemma}
\begin{proof}
\revB{Let~$D$ be the directed graph obtained from~$N$ by regrafting $(x_1,\ldots ,x_p)$ above~$\hat{v}$ (without cleaning up). Let~$v_i$ be the parent of~$x_i$ in~$D$ for $1\leq i\leq p$.} We show that~$D$ displays~$T$, from which it will follow directly that~$N'$ displays~$T$.

Since~$N$ displays~$T$, there is a subgraph~$T_s$ of~$N$ that is a refinement of~$T$. Clearly,~$T_s$ contains~$\revA{\hat{v}}$. Let~$T'_s$ be the subgraph of~$D$ obtained from~$T_s$ by \revB{regrafting $(x_1,\ldots ,x_p)$ above~$\hat{v}$} and repeatedly removing all unlabelled vertices that have outdegree~0 in~$T_s$.

It remains to prove that the resulting subgraph~$T_s'$ of~$D$ is a refinement of~$T$. \revB{First note that all leaves that are reachable from~$\hat{v}$ in~$T_s$ are in $\{x_1,\ldots ,x_p\}$ because this chain is pendant and~$N$ \revBnew{is} binary. Therefore, the chain is also pendant in~$T_s'$. We can find a set of edges of~$T_s'$ such that contracting these edges gives~$T$ as follows.} Let~$v^*$ be the lowest common ancestor of~$x_1,\ldots ,x_p$ in~$T_s$. Then there is a directed path from~$v^*$ to~$\revA{\hat{v}}$ in~$T_s$ (possibly, $v^*=\revA{\hat{v}}$). Since~$T_s$ is a refinement of~$T$, there is a set of edges~$E$ such that~$T$ can be obtained from~$T_s$ by contracting all edges of~$E$. Assume without loss of generality that~$E$ does not contain any edges whose head is a leaf. Let~$E'$ be the set of edges of~$D$ containing all edges of~$E$ that are in~$T_s'$ and all edges on the directed path from~$\revB{v^*}$ to~$v_1$ in~$T_s'$. Then,~$T$ can be obtained from~$T_s'$ by contracting the edges of~$E'$ and part of the edges on the directed path~$v_1,\ldots , v_p,
\revA{\hat{v}}$ \revB{(a part of these edges because the chain is not necessarily a star in~$T$)}. Hence,~$D$ displays~$T$.\end{proof} 

\revA{Note that, if network~$N$ in Lemma~\ref{lem:starchain} has no nontrivial pendant subtrees, the assumption in the statement of the lemma that~$\revA{\hat{v}}$ has indegree~1 is equivalent to assuming that~$x_i$ is on an edge side of the underlying generator.}
\medskip
\begin{lemma}\label{lem:spanned}
Let~$N$ be a binary network without nontrivial pendant subtrees and let~$T$ be a tree displayed by~$N$. Suppose that~$(x_1,\ldots ,x_p)$ is a chain of~$T$, \cs{that~$1\leq i,j\leq p$,} that~$x_i$ and~$x_j$ do not have a common parent in~$T$ and that~$x_i$ and~$x_j$ are on the same side of~$N$. Let~$\revA{\hat{v}}$ be the parent of~$x_i$ in~$N$. \revB{Let~$N'$ be the network obtained from~$N$ by regrafting $(x_1,\ldots ,x_p)$ above}~$\revA{\hat{v}}$ and cleaning up. Then,~$N'$ displays~$T$.
\end{lemma}
\begin{proof}
\cs{First note that, since~$x_i$ and~$x_j$ are on the same side of~$N$, this must be an edge side. Hence, the indegree of~$\revA{\hat{v}}$ is~1 (as in Lemma~\ref{lem:starchain}).}

\revB{Let~$D$ be the directed graph obtained from~$N$ by regrafting $(x_1,\ldots ,x_p)$ above~$\hat{v}$ (without cleaning up). Let~$v_i$ be the parent of~$x_i$ in~$D$ for $1\leq i\leq p$. We show (as in the proof of Lemma~\ref{lem:starchain}) that~$D$ displays~$T$, from which it will follow directly that~$N'$ displays~$T$.}

Since~$N$ displays~$T$, there is a subgraph~$T_s$ of~$N$ that is a refinement of~$T$. Clearly,~$T_s$ contains~$\revA{\hat{v}}$. Let~$T'_s$ be the subgraph of~$D$ obtained from~$T_s$ \revB{by regrafting~$(x_1,\ldots ,x_p)$ above}~$\revA{\hat{v}}$ and repeatedly removing all unlabelled vertices that have outdegree~0 in~$T_s$.

It remains to prove that the resulting subgraph~$T_s'$ of~$D$ is a refinement of~$T$. \revC{As in the proof of Lemma~\ref{lem:starchain}}, since~$T_s$ is a refinement of~$T$, there is a set of edges~$E$ such that~$T$ can be obtained from~$T_s$ by contracting all edges of~$E$. Assume without loss of generality that~$E$ does not contain any edges whose head is a leaf. \cs{Let~$v^*$ be the lowest common ancestor of~$x_1,\ldots ,x_p$ in~$T_s$. Let~$E'$ be the set of edges of~$D$ containing all edges of~$E$ that are in~$T_s'$ and all edges on the directed path from~$v^*$ to~$v_1$ in~$D$.} We distinguish two cases.

First suppose that~$(x_1,\ldots ,x_p)$ is a pendant chain \cs{(of~$T$)}. Then,~$D$ displays~$T$ \cs{by Lemma~\ref{lem:starchain}}.

Now suppose that~$(x_1,\ldots ,x_p)$ is not a pendant chain \cs{(of~$T$)}. Let~$v^{**}$ be the parent of~$x_p$ in~$T_s$ and let~$P^*$ be the directed path from~$v^*$ to~$v^{**}$ in~$T_s$. \revA{Since $x_i$ and~$x_j$ do not have a common parent in~$T$, they must also have distinct parents in~$T_s$. Moreover, since~$x_i$ and~$x_j$ are on the same side of~$N$, their parents must lie on~$P^*$.} Since the chain is not pendant,~$v^{**}$ has at least one child that has at least one leaf-descendant in~$T_s$ that is not in~$\{x_1,\ldots ,x_p\}$. Therefore, \revA{a subdivision of} the path~$P^*$ is preserved in~$T_s'$. Hence, as before,~$T$ can be obtained from~$T_s'$ by contracting the edges of~$E'$ and part of the edges on the directed path~$v_1,\ldots ,v_p,\revA{\hat{v}}$. Hence,~$N'$ displays~$T$.
\end{proof}

The next lemma shows correctness of the chain reduction, and thereby of Algorithm~\ref{alg:kernel}. It is based on the idea that, if a \revB{$q$-star chain is long enough and $q'$-star chains for~$q'>q$ have already been reduced}, then one of Lemmas~\ref{lem:starchain} and~\ref{lem:spanned} applies for each tree. \revB{Lemma~\ref{lem:starchain} applies to each tree in which the $q$-star chain is a star. In the other trees, not too many leaves can have a common parent because $q'$-star chains for~$q'>q$ have already been reduced, which will make it possible to apply Lemma~\ref{lem:spanned}.}
\medskip
\begin{lemma}\label{lem:kernel}
Let~$q\in\{0,\ldots ,t-1\}$ (with~$t=|\cT|$) and let~$(X,\cT,k)$ be an instance of \HN without nontrivial common pendant subtrees or maximal common $q'$-star chains of more than $(5k)^{t-q'}$ leaves, for~$q<q'\leq t-1$. Let $(X',\cT',k)$ be the instance obtained after applying the chain reduction to a maximal common $q$-star chain $C=(x_1,\ldots ,x_p)$ of~$\cT$ with~$p>(5k)^{t-q}$. Then~$r(\cT)\leq k$ if and only if~$r(\cT')\leq k$.
\end{lemma}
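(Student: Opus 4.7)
The plan is to prove the two directions separately. The forward direction is immediate: given a network $N$ displaying $\cT$ with $r(N)\leq k$, deleting the leaves $x_{p'+1},\ldots,x_p$ (where $p'=(5k)^{t-q}$) from $N$ and cleaning up yields a network $N'$ displaying $\cT'$ with $r(N')\leq r(N)\leq k$, since cleaning up cannot increase the reticulation number.

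For the reverse direction, I start with a binary network $N'$ (by Observation~\ref{obs:binary}) displaying $\cT'$ with $r(N')\leq k$; by a standard contraction argument I may assume $N'$ has no nontrivial pendant subtrees. The strategy is to identify a single edge side $s^*$ of the underlying generator $G$ of $N'$ on which the chain $C'=(x_1,\ldots,x_{p'})$ of $\cT'$ can be concentrated, and then to extend $s^*$ with additional subdivisions to accommodate the removed leaves $x_{p'+1},\ldots,x_p$. By Lemma~\ref{lem:sides}, $G$ has at most $5k-1$ sides, and since each vertex side carries at most one leaf, pigeonhole yields an edge side $s^*$ containing at least $\lceil p'/(5k-1)\rceil \geq (5k)^{t-q-1}+1$ chain leaves; call these $x_{i_1},\ldots,x_{i_m}$ in chain order.

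The key claim is that $s^*$ is simultaneously suitable for every $T'\in\cT'$: when $C'$ is pendant in $T'$, Lemma~\ref{lem:starchain} applies directly, and when $C'$ is not pendant in $T'$, Lemma~\ref{lem:spanned} applies provided two chain leaves on $s^*$ fail to share a parent in $T'$. To prove the claim, suppose for contradiction that in some non-pendant $T'\in\cT'$ the leaves $x_{i_1},\ldots,x_{i_m}$ all share a common parent. By Observation~\ref{obs:commonparent}, the whole index interval $I=(x_{i_1},x_{i_1+1},\ldots,x_{i_m})$ is then a star in $T'$. I will verify that $I$ is also a star in the corresponding $T\in\cT$: a short inspection of the cleaning-up operation shows that each chain vertex in $T$ carrying a kept chain leaf retains at least two children after removal of $x_{p'+1},\ldots,x_p$, so the parents of $x_{i_1},\ldots,x_{i_m}$ in $T$ survive unchanged into $T'$; hence sharing a parent in $T'$ forces sharing a parent in $T$, and Observation~\ref{obs:commonparent} again makes $I$ a star in $T$. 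Combined with the $q$ star-trees of $\cT$ in which the full chain is a star (where every sub-interval is automatically a star), this shows $I$ is a star in at least $q+1$ trees of $\cT$. Moreover $I$ cannot be a star in all $t$ trees, as then the induced fan on its leaves would be a nontrivial common pendant subtree of $\cT$, violating the hypothesis. Hence $I$ is a common $q^*$-star chain with $q<q^*\leq t-1$, and any maximal common $q^*$-star chain containing $I$ has size at least $|I|\geq m \geq (5k)^{t-q-1}+1 > (5k)^{t-q^*}$, contradicting the hypothesis on $\cT$.

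With $s^*$ in hand, a single application of the common transformation underlying Lemmas~\ref{lem:starchain} and~\ref{lem:spanned} (taking $v_p$ to be the parent in $N'$ of some $x_{i_\ell}$) yields a network $N''$ displaying $\cT'$ with $r(N'')\leq r(N')$ in which $x_1,\ldots,x_{p'}$ lie consecutively along the path on $s^*$. Appending $x_{p'+1},\ldots,x_p$ by $p-p'$ further subdivisions of this path then produces $N$: no reticulations are added, so $r(N)\leq k$, and each $T\in\cT$ is displayed by $N$ because its non-chain structure coincides with that of the corresponding $T'$ (displayed by $N''$), while the full chain can be read off the extended path by appropriate edge contractions. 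The main obstacle is the contradiction step, in particular verifying that the parents of $x_{i_1},\ldots,x_{i_m}$ are preserved under the cleaning-up from $T$ to $T'$ and that the resulting $q^*$ lands in the range $q<q^*\leq t-1$ to which the invariant on $\cT$ applies.
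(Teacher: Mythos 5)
Your proof is correct and follows essentially the same strategy as the paper's: pigeonhole the reduced chain's leaves over the at most $5k-1$ generator sides, use the hypothesis on maximal common $q'$-star chains for $q'>q$ (plus the absence of nontrivial common pendant subtrees) to rule out that the leaves sharing a side form a star in any tree where the chain is not pendant, relocate the chain via Lemmas~\ref{lem:starchain} and~\ref{lem:spanned}, and then re-extend it. The only differences are organizational: you pigeonhole all $(5k)^{t-q}$ leaves at once to get one heavily loaded side and split cases by pendant versus non-pendant, whereas the paper pigeonholes $5k$ evenly spaced representatives to get two same-side leaves with indices at least $(5k)^{t-q-1}$ apart and splits cases by star versus non-star; both yield the same contradiction with the invariant on longer $q'$-star chains.
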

\begin{proof}
It is clear that if~$r(\cT)\leq k$ then~${r(\cT')\leq k}$ because the chain reduction only deletes leaves (and suppresses and deletes vertices).

It remains to prove the other direction. Assume that~$r(\cT')\leq k$, \revB{that is}, there exists a network~$N'$ that displays~$\cT'$ and has~$r(N')\leq k$. Define~$m:=(5k)^{t-q-1}$. Hence, there are no common chains of~$\cT$ of more than~$m$ leaves that have a common parent in more than~$q$ of the trees.

Let~$C'=(x_1,\ldots ,x_{5km})$. First observe that~$C'$ is a common chain of~$\cT'$ and, moreover, that~$C'$ is a common $q'$-star chain of~$\cT'$ with~$q'\geq q$. Moreover, we claim the following.
\medskip
\begin{claim}\label{claim:commonparent}
Any two leaves in~$\{ x_1,\ldots ,x_{5km-1}\}$ have a common parent in a tree~$T\in\cT$ if and only if they have a common parent in the corresponding tree~$T'\in\cT'$.
\end{claim}

This claim follows directly from the observation that, in the chain reduction, the parents of the leaves~$x_1,\ldots ,x_{5km-1}$ cannot become outdegree-1 and are therefore not being suppressed. Correctness of the next claim can be verified in a similar way.

\medskip
\begin{claim}\label{claim:pendant}
If~$C$ is not pendant in~$T\in\cT$, then any two leaves in~$\{ x_1,\ldots ,x_{5km}\}$ have a common parent in~$T$ if and only if they have a common parent in the corresponding tree~$T'\in\cT'$.
\end{claim}

\revB{The above claim is not true for pendant chains when all but one of the children of the parent of~$x_{5km}$ are deleted by the chain reduction. However, this is not a problem because we only need the claim for non-pendant chains.}

Now define $$C^*:=(x_1,x_{1+m},x_{1+2m},\ldots ,x_{1+(5k-1)m}),$$
\revB{that is},~$C^*$ contains~$5k$ leaves and the indices of any two subsequent leaves are~$m$ apart.

Let~$G'$ be the generator underlying~$N'$. Each leaf of~$C^*$ is on a certain side of~$G'$. Since~$G'$ has at most~$5k-1$ sides (by Lemma~\ref{lem:sides}) and~$C^*$ contains~$5k$ leaves, there exist two leaves~$x_i,x_j$ of~$C^*$ \cs{(and thus of~$C'$)} that are on the same \cs{edge} side of~$G'$ by the pigeonhole principle. Assume without loss of generality that~$j>i$. Then, by the construction of~$C^*$, $j\geq i + m$. 

We modify network~$N'$ to a network~$N''$ by \revB{regrafting} chain~$C'$ \revB{above the parent~$\hat{v}$ of $x_i$ in~$N'$ and cleaning up.}

For each tree~$T'\in\cT'$ in which all of $x_1,\ldots ,x_{5km}$ have \leo{a} common parent, Lemma~\ref{lem:starchain} shows that~$N''$ displays~$T'$. There are at least~$q$ such trees. In fact, the following claim \cs{implies} that there are precisely~$q$ such trees. Moreover, the claim shows that in all other trees~$x_i$ and~$x_j$ do \emph{not} have a common parent. Therefore, it follows from Lemma~\ref{lem:spanned} that these trees are also displayed by~$N''$.
\medskip
\begin{claim}\label{claim:atmostq}
The number of trees of~$\cT'$ in which~$x_i$ and~$x_j$ have a common parent is at most~$q$.
\end{claim}

To prove the claim, consider~$C^{**}:=(x_i,\ldots ,x_j)$. Since~$C^{**}$ is a subchain of~$C'$, it is a chain of each tree in~$\cT'$ by Observation~\ref{obs:subchain}.

First consider the case~$q=t-1$ and assume that~$x_i$ and~$x_j$ have a common parent in more than~$q$ trees in~$\cT'$ and hence in all trees in~$\cT'$. Then,~$x_i,\ldots ,x_j$ all have a common parent in all trees in~$\cT'$, by Observation~\ref{obs:commonparent}. Since~$C$ is a $q$-star chain of~$\cT$, there are~$q=t-1$ trees in~$\cT$ in which all leaves of~$C$ have a common parent. Let~$T^*$ be the only tree in~$\cT$ in which the leaves of~$C$ do not all have a common parent. Then~$C$ is not pendant in~$T^*$ or its leaves would form a nontrivial common pendant subtree of~$\cT$. Hence,~$x_i$ and~$x_j$ have a common parent in~$T^*$ by Claim~\ref{claim:pendant}. However, \revC{then~$x_i,~x_j$, and their common parent form a nontrivial common pendant subtree of~$\cT$. This is a contradiction to the assumption that~$\cT$ has no nontrivial common pendant subtrees.}

To finish the proof of Claim~\ref{claim:atmostq}, consider the case $q<t-1$. \revA{In this case, $m>1$. Recall that $x_j$ is in~$C^*$ and that the highest-indexed element of~$C^*$ is~$x_{1+(5k-1)m}$. Hence, $m>1$ implies that $5km > 1 + (5k-1)m$ and hence that}~$C^{**}$ contains only leaves in~$\{x_1,\ldots ,x_{5km-1}\}$. Because~$C^{**}$ contains more than~$m$ leaves, the number of trees of~$\cT$ in which all the leaves of~$C^{**}$ have a common parent is at most~$q$ \leo{(here we use the fact that there are no common $q'$-star chains for~$q'>q$ that have more than~$m$ leaves)}. Hence, it follows from Claim~\ref{claim:commonparent} that the number of trees of~$\cT'$ in which the leaves of~$C^{**}$ have a common parent is at most~$q$. Claim~\ref{claim:atmostq} then follows by Observation~\ref{obs:commonparent}.

Hence, we have shown that~$N''$ displays~$\cT'$. We now construct a network~$N$ from~$N''$ by replacing the reduced chain by the unreduced chain. More precisely, let~$e_{5km}$ be the edge of~$N''$ that leaves~$v_{5km}$ but is not the edge~$(v_{5km},x_{5km})$. Subdivide~$e_{5km}$ by a directed path $(v_{5km+1},\ldots ,v_p)$ and add leaves~$x_{5km+1},\ldots ,x_p$ by edges $(v_{5km+1},x_{5km+1}),$ $\ldots ,$  $(v_p,x_p)$. This gives~$N$. Then, by a similar argument as in the proof of Lemma~\ref{lem:spanned},~$N$ displays~$\cT$. Moreover, since none of the applied operations increase the reticulation number, we have~$r(N)\leq r(N')$.
\end{proof}

The next lemma shows that the chain reduction can be performed in polynomial time. It uses the following additional definitions \revA{and observation}. \revB{We define an $s$-$t$-\emph{chain} of a tree~$T$ as a chain $(x_1,\ldots ,x_p)$ of~$T$ with~$x_1=s$ and~$x_p=t$. A set~$C\subseteq X$ of leaves is $s$-$t$-\emph{chainable} (\revB{with respect to} a set~$\cT$ of trees on~$X$) if there exists an ordering of the leaves in~$C$ that is a common $s$-$t$-chain for~$\cT$. We first show that this ordering is unique when~$\cT$ has no nontrivial common pendant subtrees.}

\medskip

\begin{observation}\label{obs:unique} \revA{If~$\cT$ has no nontrivial common pendant subtrees and~$C$ is $s$-$t$-chainable \revB{with respect to}~$\cT$, then there exists a unique common $s$-$t$-chain of~$\cT$ with leaf set~$C$.}
\end{observation}
\begin{proof}\revA{Assume to the contrary that there are two distinct common $s$-$t$-chains of~$\cT$ with leaf set~$C$. Then there exist two leaves~$x,y\in C$ such that~$x$ comes before~$y$ in one of these chains and after~$y$ in the other one. Then, in each~$T\in \cT$, $x$ is above~$y$ and~$y$ is above~$x$, implying that~$x$ and~$y$ have a common parent. This contradicts the assumption that~$\cT$ has no nontrivial common pendant subtrees.}
\end{proof}
\medskip
\begin{lemma}\label{lem:polychain}
There exists \revB{an $O(|X|^6|\cT|)$-time} algorithm that, given a set~$\cT$ of trees on~$X$ and~$q\in\NN$, decides if there exists a common $q$-star chain of~$\cT$ and constructs such a chain of maximum size if one exists.
\end{lemma}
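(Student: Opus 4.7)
The plan is to enumerate all $O(|X|^2)$ ordered pairs $(s,t) \in X \times X$ of distinct leaves and, for each pair, to compute (if it exists) the maximum common chain of~$\cT$ whose first element is~$s$ and last element is~$t$. The longest recorded chain whose star-count equals~$q$ is returned; if no such chain is ever recorded the algorithm reports nonexistence.

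For a fixed pair $(s,t)$ and tree $T\in\cT$, I would first test whether any $s$--$t$-chain exists in~$T$: $p_T(s)$ must be a (weak) ancestor of $p_T(t)$ and, along the directed path $P_T$ from $p_T(s)$ to $p_T(t)$, every intermediate vertex~$v_i$ (with $1<i<\tau_T$) must have all non-successor children being leaves (Condition~4 of Definition~\ref{def:chain}). When these tests pass, the candidate leaf set $L_T$ consists of all leaf children of the vertices of $P_T$, partitioned by parent into an ordered sequence $\Pi_T$. Because internal vertices of a tree have outdegree at least two, every intermediate group of $\Pi_T$ contains at least one ``forced'' leaf that must appear in every $s$--$t$-chain of~$T$; the leaves attached to $p_T(s)$ or $p_T(t)$ other than $s,t$ are optional.

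Next I take $L:=\bigcap_{T\in\cT} L_T$, verify that $L$ contains every forced leaf of every tree (otherwise no common chain with these endpoints exists), and search for a global ordering of $L$ compatible with each $\Pi_T$. This reduces to acyclicity of the directed graph on~$L$ whose arcs express ``strictly precedes in some $\Pi_T$''. If the graph is acyclic, any topological sort $(x_1,\ldots,x_p)$ visits, within each tree, the groups of $\Pi_T$ in their prescribed order; since the intermediate groups are nonempty, no groups are skipped and Condition~3 of Definition~\ref{def:chain} holds in every tree, while Condition~4 holds by forced-leaf inclusion.

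Finally, the chain is a star in tree~$T$ precisely when $P_T$ collapses to a single vertex, i.e.\ when $p_T(s)=p_T(t)$; hence $q':=|\{T\in\cT : p_T(s)=p_T(t)\}|$ is determined by $(s,t)$ alone, making the $q$-star test immediate. For correctness, any common $q$-star chain $(x_1,\ldots,x_p)$ is captured at the pair $(x_1,x_p)$, because the procedure above produces a chain whose leaf set contains $\{x_1,\ldots,x_p\}$. Every step runs in polynomial time using LCA computations, tree traversals and elementary DAG operations. The main obstacle will be showing that the DAG-acyclicity condition, together with the forced-leaf inclusion, is not only necessary but also sufficient for a common chain with endpoints $(s,t)$ to exist; this is where the interplay between Definition~\ref{def:chain} and the per-tree partitions $\Pi_T$ must be argued carefully, appealing to Observation~\ref{obs:commonparent} to handle common-parent structures.
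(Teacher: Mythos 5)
There is a genuine gap in both the decision and the maximization step. For a fixed pair $(s,t)$ you take $L=\bigcap_{T\in\cT} L_T$, check that it contains all forced leaves, and then test whether the precedence digraph on \emph{all} of $L$ is acyclic. But acyclicity of this graph is not necessary for a common $s$-$t$-chain to exist: the optional leaves (those hanging off $p_T(s)$ or $p_T(t)$) can be pairwise incompatible even though each one individually extends the forced core. Concretely, take two trees with $p_{T_1}(s)\neq p_{T_1}(t)$ and $p_{T_2}(s)\neq p_{T_2}(t)$, let $x$ be a leaf child of $p_{T_1}(s)$ and of $p_{T_2}(t)$, and let $y$ be a leaf child of $p_{T_1}(t)$ and of $p_{T_2}(s)$. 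Then $(s,x,t)$ and $(s,y,t)$ are both common chains, but $L=\{s,x,y,t\}$ induces the 2-cycle $x\rightarrow y$ (from $T_1$) and $y\rightarrow x$ (from $T_2$), so your algorithm reports failure at the pair $(s,t)$; in particular your closing claim --- that the procedure always outputs a chain whose leaf set contains any given common chain with these endpoints --- is false. The same phenomenon breaks maximality: when some but not all optional leaves can be added, ``take everything and topologically sort'' does not tell you which subset to keep, and choosing a maximum subset whose precedence graph is acyclic is not a task you can wave away.

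The paper's proof resolves exactly this point. It first computes the forced set $C$ (the leaves whose parent is an \emph{internal} vertex of some $P_T$, which every $s$-$t$-chain must contain) and checks that $C$ alone is chainable, using essentially your acyclicity test --- that part of your outline is fine. It then restricts attention to the optional leaves $X'$ that are \emph{individually} addable to $C$, and builds a digraph on $X'$ with an arc $(x,y)$ precisely when every tree has a directed path from $p_T(x)$ to $p_T(y)$ (allowing equality). The key structural claim, which your proposal is missing, is that a subset of $X'$ can be added to $C$ simultaneously if and only if it forms a directed path in this DAG; hence the maximum common $s$-$t$-chain is obtained from a longest path in a DAG, which is polynomial-time computable. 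The rest of your outline (endpoint enumeration, the forced-versus-optional distinction, and determining the $q$-star count from $|\{T\in\cT : p_T(s)=p_T(t)\}|$) agrees with the paper.
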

\begin{proof}
First, we show how to decide in \revB{polynomial time} if a set~$C\subseteq X$ is \revB{$s$-$t$-}chainable. It is easy to check whether \revB{parts}~(1), (2) and~(4) of Definition~\ref{def:chain} are satisfied in each tree in~$\cT$ \revB{and whether~$x_1=s$ and~$x_p=t$. Hence, assume that this is the case. \revB{Part}~(3) \revB{is then equivalent to (3'), which} requires that there exists an ordering $(x_1,\ldots ,x_p)$ of the leaves of~$C$ such that, whenever $i < j$, $x_i$ is above~$x_j$ in each tree~$T\in\cT$. To check if there exists such an ordering, construct a directed graph~$D_C$ with vertex set~$C$ and an edge~$(x,y)$ precisely if there exists a tree~$T\in\cT$ in which~$x$ is strictly above~$y$. An ordering $(x_1,\ldots ,x_p)$ of the leaves in~$C$ is called a \emph{topological ordering} for~$D_C$ if there is no edge $(x_j,x_i)$ with $i<j$. Hence, $(x_1,\ldots ,x_p)$ is a topological ordering of~$D_C$ if and only if, whenever $i<j$, there is no tree~$T\in\cT$ in which~$x_j$ is strictly above~$x_i$, \revB{that is}, $x_i$ is above~$x_j$ in all trees~$T\in\cT$. Since it is well known that a directed graph is acyclic if and only if there exists a topological ordering of its vertices, it follows that $D_C$ is a acyclic if and only if~$C$ is $s$-$t$-chainable.}

We \cs{now} describe an algorithm that decides, for~$s,t\in X$, if there exists a common $q$-star $s$-$t$-chain of~$\cT$ and constructs such a chain of maximum size if one exists. This proves the lemma because one can simply try each combination $s,t\in X$.

The algorithm is as follows. If, in at least one~$T\in\cT$, $s$ is not above $t$ then there exists no $s$-$t$-chain of~$T$ and we stop. Otherwise, let~$P_T$ be the directed path from~$p_T(s)$ to~$p_T(t)$ in~$T$. If $|\{T\in\cT \mid p_T(s) = p_T(t) \}|\neq q$, then no $s$-$t$-chain is a $q$-star chain and we stop. Otherwise, define~$C$ as the set of leaves containing~$s$,~$t$ and all leaves~$x\in X$ for which~$p_T(x)$ is an internal vertex of~$P_T$ for at least one~$T\in\cT$. Note that any common $s$-$t$-chain must contain all leaves in~$C$. Moreover, if~$C$ is not \revB{$s$-$t$-}chainable then \revB{either it contains a leaf that has its parent not on the path~$P_T$ in some~$T\in\cT$, or the directed graph~$D_C$ is cyclic. In either case,} no superset of~$C$ is \revB{$s$-$t$-}chainable. Hence, if~$C$ is not \revB{$s$-$t$-}chainable, then~$\cT$ has no common $s$-$t$-chain and we stop. Otherwise, there exists a unique common chain on~$C$. In that case, we try to add as many leaves as possible to~$C$ such that it remains \revB{$s$-$t$-}chainable (because we need a chain of maximum size). Define
\[
X' := \{x\in X\setminus C \hspace{5pt}\mid\hspace{5pt} p_T(x)\in\{p_T(s),p_T(t)\} \hspace{5pt} \forall \hspace{2pt} T\in\cT \mbox{ and } C\cup\{x\} \mbox{ is chainable}\}.
\]
The set~$X'$ contains all leaves of~$X\setminus C$ that can be in a common $s$-$t$-chain of~$\cT$. Now consider the directed graph~$D=(X',A)$ with an edge~$(x,y)\in A$ precisely \revA{if~$x$ is above~$y$} in all~$T\in\cT$. \revB{Observe that~$D$ is acyclic because, if it had a directed cycle, the vertices in the cycle would correspond to leaves with a common parent in all trees~$T\in\cT$, contradicting the assumption that~$\cT$ has no nontrivial common pendant subtrees.}

We claim that a set~$X''\subseteq X'$ forms a directed path in~$D$ if and only if~$C\cup X''$ is chainable. First suppose that~$C\cup X''$ is chainable. Then let~$(x_1,\ldots ,x_n)$ be an ordering of the elements of~$X''$ following the ordering in the chain (from top to bottom). Then there is, in each~$T\in\cT$, a directed path from~$p_T(x_i)$ to~$p_T(x_{i+1})$ ($i=1,\ldots ,n-1$) and hence $(x_1,\ldots ,x_n)$ is a directed path in~$D$. To show the converse, assume that $(x_1,\ldots ,x_n)$ is a directed path in~$D$. By the definition of~$X'$, for each~$x\in X'$ \revA{it} holds that~$C\cup\{x\}$ is chainable. Moreover, the position where~$x$ can be inserted into \cs{the chain on}~$C$ is unique \revB{by Observation~\ref{obs:unique}}. Hence, the leaves~$x_1,\ldots ,x_n$ can be inserted into \cs{the chain on}~$C$ one by one, where their relative position is determined by the order $x_1,\ldots ,x_n$. This gives a common chain of~$\cT$ and hence~$C\cup X''$ is chainable.

A longest directed path in~$D$ can be found in polynomial time since~$D$ is acyclic. Let~$X_P$ be the set of vertices on such a longest path. Then $C\cup X_P$ is chainable and the corresponding common $s$-$t$-chain of~$\cT$ is a common $q$-star $s$-$t$-chain of maximum size. 

\revB{It remains to analyse the running time. Checking if a set is chainable takes $O(|X|^3|\cT|)$ time. Hence, constructing the set~$X'$ takes~$O(|X|^4|\cT|)$ time. Constructing the graph~$D$ takes $O(|X|^3|\cT|)$ time and searching for a longest path in~$D$ takes $O(|X|^2)$ time. Hence, the running time is dominated by the construction of the set~$X'$. Since this is done for each $s,t\in X$, the total running time is $O(|X|^6|\cT|)$.}\end{proof}

To see that Algorithm~\ref{alg:kernel} runs in polynomial time, it remains to observe that at least one leaf is removed in each iteration and hence that the number of iterations is bounded by~$|X|$. \revB{Since the running time is dominated by the search for maximum common $q$-star chains, it follows from Lemma~\ref{lem:polychain} that Algorithm~\ref{alg:kernel} runs in $O(|X|^7|\cT|)$ time.}

It remains to bound the size of the kernel.
\medskip
\begin{lemma}\label{lem:kernelsize}
Let $(X,\cT,k)$ be an instance of \HN with~$k\geq 1$ and $r(\cT)\leq k$. If~$(X',\cT',k)$ is the instance obtained after applying Algorithm~\ref{alg:kernel}, then $|X'|\leq 4k(5k)^{|\cT|}$.
\end{lemma}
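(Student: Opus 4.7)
The plan is to fix a binary network $N'$ that displays $\cT'$ with $r(N') \leq k$ (which exists by Observation~\ref{obs:binary}) and to bound $|X'|$ by charging its leaves to sides of the reticulation generator $G$ underlying $N'$.

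The first and main step is to argue that $N'$ may be chosen to have no nontrivial pendant subtrees. Were $S$ such a subtree of $N'$, with leaf set $L$ and root $v$, then for each $T' \in \cT'$ and every subgraph $T_s \subseteq N'$ refining $T'$ the vertex $v$ would be forced into $T_s$, since $v$ is a cut-vertex for every directed path from the root of $N'$ to a leaf of $L$. Tracking which children of $v$ must remain in $T_s$ (exactly those needed to reach every leaf of $L$) and which contractions to $T'$ are therefore forced shows that $L$ must form a cluster in $T'$. Consequently $L$ would induce a nontrivial common pendant subtree of $\cT'$, contradicting the invariant maintained by the subtree reduction. The degenerate case $r(N')=0$ is handled separately: $N'$ is then a tree, so if $|X'| \geq 2$ the whole tree $N'$ would be a common refinement of $\cT'$ and hence a nontrivial common pendant subtree of $\cT'$; therefore $|X'| \leq 1$, which is bounded by $4k(5k)^t$ trivially.

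Granting $r(N') \geq 1$ and no nontrivial pendant subtrees, Lemma~\ref{lem:sides} says that $G$ has at most $4k-1$ edge sides and at most $k$ vertex sides. Since $N'$ is binary each vertex side carries exactly one leaf, contributing at most $k$ leaves in total. For an edge side, let $x_1,\ldots,x_p$ be its leaves in subdivision order and $v_1,\ldots,v_p$ their parents. In any $T_s \subseteq N'$ refining a tree $T' \in \cT'$, each $v_i$ must belong to $T_s$ because $x_i$ does, and the unique in-edge of $v_i$ in $N'$ comes from $v_{i-1}$, so the entire path $v_1,\ldots,v_p$ lies in $T_s$. After contracting some edges to obtain $T'$, the leaves $x_1,\ldots,x_p$ still hang off a (possibly shorter) directed path, and it is routine to check that all four clauses of Definition~\ref{def:chain} hold. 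Hence $(x_1,\ldots,x_p)$ is a common chain of $\cT'$.

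Extend $(x_1,\ldots,x_p)$ to a maximal common chain $C^*$ of $\cT'$ and let $q$ be the number of trees in $\cT'$ in which all leaves of $C^*$ share a common parent. Then $q \leq t-1$, since otherwise $C^*$ itself would be a nontrivial common pendant subtree of $\cT'$. Because $C^*$ is maximal as a common chain it is in particular maximal as a common $q$-star chain, so the chain reduction guarantees $|C^*| \leq (5k)^{t-q} \leq (5k)^t$, and hence $p \leq (5k)^t$. Summing across all sides gives $|X'| \leq (4k-1)(5k)^t + k \leq 4k(5k)^t$ (using $k \geq 1$ and $t \geq 1$ to absorb the $+k$ term into one extra factor of $(5k)^t$). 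The main obstacle is the first step—justifying that pendant subtrees of $N'$ necessarily cluster in each tree of $\cT'$; the remaining counting follows the familiar generator-plus-chain template.
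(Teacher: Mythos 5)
Your proof is correct and follows essentially the same route as the paper's, which simply asserts the two facts you take care to verify: that $N'$ inherits the absence of nontrivial pendant subtrees from~$\cT'$, and that the leaves on each edge side form a common chain of~$\cT'$ whose length is at most~$(5k)^t$ via the maximal $q$-star chain accounting. One small caveat: your intermediate claim that the leaf set~$L$ of a nontrivial pendant subtree of~$N'$ must form a \emph{cluster} in every~$T'\in\cT'$ is too strong (contracting the edge above the root of that subtree can give its image additional children in~$T'$), but nothing is lost, because the paper's definition of a pendant subtree of a tree permits the root to have children outside the subtree, so already a cherry of the binary network~$N'$ yields a nontrivial common pendant subtree of~$\cT'$ and the desired contradiction.
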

\begin{proof}
Since~$r(\cT)\leq k$, also $r(\cT')\leq k$ by Lemma~\ref{lem:kernel}. Hence, there exists a \revA{network}~$N'$ on~$X'$ that displays~$\cT'$ and has~$r(N')\leq k$. We can and will assume that~$N'$ is binary (by Observation~\ref{obs:binary}). Since subtree reductions have been applied in Algorithm~\ref{alg:kernel},~$\cT'$ does not contain any nontrivial common pendant subtrees. Hence,~$N'$ does not contain any nontrivial pendant subtrees. Let~$G'$ be the generator underlying~$N'$. By Lemma~\ref{lem:sides},~$G'$ has at most~$k$ vertex sides and at most~$4k-1$ edge sides. Each vertex side contains exactly one leaf. Since chain reductions have been applied in Algorithm~\ref{alg:kernel},~$\cT'$ does not contain any common chains of more than~$(5k)^{|\cT|}$ leaves. Hence, each edge side of~$G'$ contains at most $(5k)^{|\cT|}$ leaves. Therefore,
$$|X'| \leq k + (4k-1)(5k)^{|\cT|} \leq 4k(5k)^{|\cT|}.$$
\end{proof}

Correctness of the following theorem now follows from Lemmas~\ref{lem:subtreereduction}--\ref{lem:polysubtree} and~\ref{lem:kernel}--\ref{lem:kernelsize}.
\medskip
\begin{theorem}\label{thm:kernel}
The problem \HN on~$|\cT|=t$ trees admits a kernel with at most $4k(5k)^t$ leaves.
\end{theorem}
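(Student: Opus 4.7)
The plan is to verify that Algorithm~\ref{alg:kernel} meets the three requirements of a kernelization: polynomial running time, correctness (equivalence of the input and output instances), and the stated size bound on the output.

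For the running time, I would observe that each iteration of the outer loop either triggers a subtree reduction in Lines~1--3, which can be detected in polynomial time by Lemma~\ref{lem:polysubtree}, or, failing that, tries each value $q \in \{t-1,t-2,\ldots,0\}$ in the chain reduction, each of which is checkable in polynomial time by Lemma~\ref{lem:polychain}. Each applied reduction strictly decreases $|X|$, so there are at most $|X|$ iterations, giving an overall polynomial running time in $|X|$ and $t$.

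For correctness, I would verify that each reduction preserves whether $r(\cT) \leq k$. The subtree reduction case is handled directly by Lemma~\ref{lem:subtreereduction}. The chain reduction case is handled by Lemma~\ref{lem:kernel}, but that lemma imposes hypotheses that must hold at the moment of application: the instance must have no nontrivial common pendant subtree, and must have no maximal common $q'$-star chain of more than $(5k)^{t-q'}$ leaves for every $q' > q$. The first condition is ensured by the ``\textbf{go to} Line 1'' in both reduction blocks: each reduction restarts the algorithm at the subtree reduction, so the chain reduction block is only reached after the subtree reduction fails to fire. The second condition is ensured by iterating $q$ from $t-1$ downward, so that by the time a chain reduction fires at index $q$, the algorithm has already verified that no chain reduction can fire for any larger $q'$ on the current instance.

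For the size bound, I would distinguish two cases on the input. If $r(\cT) \leq k$, the final instance $(X',\cT',k)$ produced by Algorithm~\ref{alg:kernel} satisfies $|X'| \leq 4k(5k)^t$ directly by Lemma~\ref{lem:kernelsize}. If instead $r(\cT) > k$, then by the correctness argument above we also have $r(\cT') > k$, so the input was already a no-instance and we may replace the output by any trivial no-instance of constant size. Either way the output has at most $4k(5k)^t$ leaves. The main subtlety, and the part that requires the most care, is the maintenance of the compound precondition of Lemma~\ref{lem:kernel} throughout the execution; this is precisely why the loop over $q$ is decreasing and why every successful reduction restarts from Line~1, rather than continuing in place.
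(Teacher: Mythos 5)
Your proposal is correct and follows essentially the same route as the paper, which simply assembles Lemmas~\ref{lem:subtreereduction}--\ref{lem:polysubtree} and~\ref{lem:kernel}--\ref{lem:kernelsize} together with the observation that each reduction removes at least one leaf, so the number of iterations is at most $|X|$. You are in fact slightly more careful than the paper on two points the paper leaves implicit: that the restart-at-Line-1 structure and the decreasing loop over $q$ maintain the preconditions of Lemma~\ref{lem:kernel}, and that when the output exceeds the size bound one concludes (via the contrapositive of Lemma~\ref{lem:kernelsize} plus correctness) that the instance is a no-instance and outputs a trivial one.
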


\section{A polynomial kernel for bounded outdegrees}\label{sec:boundedoutdegrees}

We now show that an approach similar to the one in the previous section can also be used to obtain a kernelization for \HN in the case that not the number of input trees but their maximum outdegree is bounded. Algorithm~\ref{alg:kernel2} describes a polynomial kernel for \HN if the maximum outdegree of the input trees is \revB{at most}~$\Delta^+$.

\begin{algorithm}[h]
\leo{Apply the subtree reduction (see Algorithm~\ref{alg:kernel}).}\\
{\textbf{Chain Reduction:}
\If{there is a maximal common chain $(x_1,\ldots ,x_p)$ of~$\cT$ with~$p > 5k(\Delta^+-1)$}
{Delete leaves $x_{5k(\Delta^+-1)+1},\ldots ,x_p$ from~$X$ and from each tree in~$\cT$ and repeatedly suppress outdegree-1 vertices and delete unlabelled outdegree-0 vertices until no such vertices remain.\\
\textbf{go to} Line 1}
}
\caption{Kernelization algorithm for bounded outdegree\label{alg:kernel2}}
\end{algorithm}

\revB{Algorithm~\ref{alg:kernel2} has the same running time as Algorithm~\ref{alg:kernel}, namely $O(|X|^7|\cT|)$.}

\medskip
\begin{theorem}\label{thm:kerneldegree}
The problem \HN on trees with maximum outdegree~$\Delta^+$ admits a kernel with at most $20k^2(\Delta^+-1)$ leaves.
\end{theorem}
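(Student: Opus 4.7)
The proof follows the three-step template used for Theorem~\ref{thm:kernel}. Correctness of the subtree reduction line is immediate from Lemma~\ref{lem:subtreereduction}. The kernel-size bound is exactly as in Lemma~\ref{lem:kernelsize}: after the algorithm finishes, assuming $r(\cT')\leq k$, I would take a binary witness network~$N'$ without nontrivial pendant subtrees (justified by Observation~\ref{obs:binary} and by the fact that the subtree reduction has been exhausted), whose underlying generator~$G'$ has, by Lemma~\ref{lem:sides}, at most~$k$ vertex sides (each carrying a single leaf) and at most~$4k-1$ edge sides. Since Algorithm~\ref{alg:kernel2} has removed every maximal common chain of~$\cT'$ of length more than $5k(\Delta^+-1)$, each edge side carries at most $5k(\Delta^+-1)$ chain leaves, giving $|X'|\leq k+(4k-1)\cdot 5k(\Delta^+-1)\leq 20k^2(\Delta^+-1)$.

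The substance of the argument will be a chain-reduction lemma analogous to Lemma~\ref{lem:kernel}: if $(X,\cT,k)$ has no nontrivial common pendant subtree and $C=(x_1,\ldots,x_p)$ is a maximal common chain with $p>5k(\Delta^+-1)$, then truncating~$C$ to length $5k(\Delta^+-1)$ produces an instance $(X',\cT',k)$ with the same answer. The~$(\Rightarrow)$ direction is trivial, so I would focus on~$(\Leftarrow)$. Starting from a binary witness~$N'$ for~$\cT'$ without nontrivial pendant subtrees, I would select the evenly spaced sub-chain $C^{*}=(x_{a_1},\ldots,x_{a_{5k}})$ with $a_\ell=1+(\ell-1)(\Delta^+-1)$; since $1+(5k-1)(\Delta^+-1)\leq 5k(\Delta^+-1)$ whenever $\Delta^+\geq 2$, these indices fit inside the truncated chain. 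Because Lemma~\ref{lem:sides} bounds the sides of~$G'$ by $5k-1$, the pigeonhole principle delivers two leaves $x_{a_p},x_{a_q}$, $p<q$, on a common side, necessarily an edge side (vertex sides carry only one leaf). Following Lemmas~\ref{lem:starchain} and~\ref{lem:spanned}, I would then move the whole truncated chain~$C'$ to this edge side to obtain a network~$N''$, and finally, exactly as in the closing paragraph of Lemma~\ref{lem:kernel}, reinsert the removed suffix of~$C$ below the new path to obtain a network~$N$ on~$X$ with $r(N)\leq r(N')\leq k$.

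The hard part will be verifying that~$N''$ displays every $T'\in\cT'$. My plan is to split on whether the chain is pendant in~$T'$: in the pendant case Lemma~\ref{lem:starchain} applies directly (the chosen side is an edge side). In the non-pendant case Lemma~\ref{lem:spanned} applies provided $x_{a_p}$ and $x_{a_q}$ do not share a parent in~$T'$, and this is exactly where the outdegree bound~$\Delta^+$ is used. Suppose for contradiction that they did share a parent~$v$; Observation~\ref{obs:commonparent} would force $x_{a_p},x_{a_p+1},\ldots,x_{a_q}$ all to be children of~$v$, so $v$ would have at least $(q-p)(\Delta^+-1)+1\geq\Delta^+$ chain children. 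Combined with the outdegree cap~$\Delta^+$, this would force $q=p+1$ (since $q\geq p+2$ already yields $2\Delta^+-1>\Delta^+$ chain children) and~$v$ to have exactly~$\Delta^+$ children, all chain leaves, leaving no room for a child on the chain path. Hence~$v=v_\tau$ and all children of~$v_\tau$ are chain leaves, so the chain is pendant in~$T'$, contradicting the case assumption. This closes the verification. Polynomial running time of Algorithm~\ref{alg:kernel2} follows from a direct maximum-common-chain routine (for instance by invoking Lemma~\ref{lem:polychain} for each value of~$q$), together with the fact that every iteration strictly decreases~$|X|$.
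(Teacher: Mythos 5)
Your proposal is correct and follows essentially the same route as the paper's proof: the same evenly spaced subchain $C^*$, the same pigeonhole argument over the at most $5k-1$ sides, the same use of the outdegree bound to show that a shared parent of $x_{a_p}$ and $x_{a_q}$ would have all $\Delta^+$ of its children among the chain leaves (forcing the chain to be pendant in that tree), and the same relocation/reinsertion via Lemmas~\ref{lem:starchain} and~\ref{lem:spanned}. The only difference is that you phrase the key dichotomy contrapositively (non-pendant implies no common parent) rather than directly, which is logically equivalent.
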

\begin{proof}
We claim that Algorithm~\ref{alg:kernel2} provides the required kernelization. The algorithm can be applied in polynomial time by Lemmas~\ref{lem:polysubtree} and~\ref{lem:polychain} (see above). Correctness of the subtree reduction has been shown in Lemma~\ref{lem:subtreereduction} and the proof of the kernel size is analogous to the proof of Lemma~\ref{lem:kernelsize}. Hence, it remains to show correctness of the chain reduction.

Let~$(X,\cT,k)$ be an instance of \HN, let $C=(x_1,\ldots ,x_p)$ be a maximal common chain of~$\cT$ and let $(X',\cT',k)$ be the instance obtained by reducing this chain to~$C'=(x_1,\ldots ,x_{5k(\Delta^+-1)})$. As in the proof of Lemma~\ref{lem:kernel}, the only nontrivial direction of the proof is to show that if $r(\cT')\leq k$ then $r(\cT)\leq k$.

Assume that $r(\cT')\leq k$, \revB{that is}, that there exists a \revA{network}~$N'$ that displays~$\cT'$ and has~$r(N')\leq k$. Define $$C^*:=\{x_1 ,x_{1+(\Delta^+-1)}, x_{1+2(\Delta^+-1)}, \ldots ,x_{1+(5k-1)(\Delta^+-1)}\}.$$ Then, since~$C^*$ contains~$5k$ leaves and~$N'$ has at most~$5k-1$ sides, there are two leaves~$x_i,x_j$ in~$C^*$ that are on the same \cs{(edge)} side of~$N'$. Assume~$j>i$. By the construction of~$C^*$, $|\{x_i,\ldots ,x_j\}| > \Delta^+-1$. Consider a tree~$T'\in\cT'$ in which~$x_i$ and~$x_j$ have a common parent~$v$. Since~$v$ has outdegree at most~$\Delta^+$, its children are precisely~$x_i,\ldots ,x_j$. This means that~$C'$ is a pendant chain in~$T'$ (because, by the definition of pendant chain, any \revA{parent of a leaf} of a non-pendant chain has at least one child that is not a leaf or a leaf not in the chain). Hence, in each tree~$T'\in\cT'$, either~$x_i$ and~$x_j$ do not have a common parent or~$C'$ is a pendant chain.

\revB{Let $m=\Delta^+-1$.} We modify network~$N'$ to a network~$N''$ \revB{by regrafting $C'=(x_1,\ldots ,x_{5km})$ above the parent of~$x_i$ and cleaning up (see the definition of regrafting above Lemma~\ref{lem:starchain}). Let~$v_i$ be the parent of~$x_i$ in~$N''$, for $1\leq i \leq 5km$.}

For each tree~$T'\in\cT'$ in which~$C'$ is a pendant chain, it follows from \cs{Lemma~\ref{lem:starchain}} that~$N''$ displays~$T'$.

For each tree~$T'\in\cT'$ in which~$x_i$ and~$x_j$ do not have a common parent, it follows from Lemma~\ref{lem:spanned} that~$N''$ displays~$T'$ (using that~$x_i$ and~$x_j$ are on the same side in~$N'$).

Therefore,~$N''$ displays~$\cT'$. We now construct a network~$N$ from~$N''$ as follows. Let~$e_{5km}$ be the edge that leaves~$v_{5km}$ but is not the edge~$(v_{5km},x_{5km})$. Subdivide~$e_{5km}$ by a directed path $(v_{5km+1},\ldots ,v_p)$ and add leaves~$x_{5km+1},\ldots ,x_p$ by edges $(v_{5km+1},x_{5km+1}), \ldots ,$ $(v_p,x_p)$. Then,~$N$ displays~$\cT$. Moreover, since none of the applied operations increase the reticulation number, we have~$r(N)\leq r(N')$ and hence that $r(\cT)\leq k$.
\end{proof}

\section{An exponential-time algorithm}\label{sec:alg}

\leo{From the existence of the kernelization in Theorem~\ref{thm:kernel} it follows directly that there exists an FPT algorithm for \HN parameterized by~$k$ and~$t=|\cT|$. Nevertheless, we find it useful to describe an exponential-time algorithm for \HN which combined with the kernelization then gives an explicit FPT algorithm. We do this in this section. Although the running time of the algorithm is not particularly fast, the algorithm is nontrivial and it is not clear if a significantly faster algorithm exists for general instances of \HN.} \revB{Moreover, this is the first proof that \HN is in the class~$\textsf{XP}$.}

\leo{As before, we may restrict to binary networks by Observation~\ref{obs:binary}. Moreover, by Lemmas~\ref{lem:polysubtree} and~\ref{lem:subtreereduction} we may assume that~$\cT$ has no nontrivial common pendant subtrees and hence we may restrict to networks with no nontrivial pendant subtrees.}

\leo{We need a few additional definitions. Let~$G$ be the underlying generator of network~$N$ and let~$s$ be a side of~$G$. We use~$X_N(s)$ to denote the set of leaves that are on side~$s$ in network~$N$.
%For two leaves~$x,y\in X_N(s)$, we say that~$x$ is \emph{above} $y$ if there is a directed path from the parent of~$x$ to~$y$ (hence, each leaf is above itself).
\revX{The} \emph{top leaf} on side~$s$ in~$N$ \revX{is} the leaf~$x_s^+\in X_N(s)$ for which there is no leaf in~$X_N(s)\setminus\{x_s^+\}$ that is above~$x_s^+$. Similarly, the \emph{bottom leaf} on side~$s$ in~$N$ is the leaf~$x_s^-\in X_N(s)$ for which all leaves in~$X_N(s)$ are above~$x_s^-$ \revX{(note that each leaf is above itself)}. For two sides~$s,s'$ of~$G$, we say that~$s'$ is \emph{below}~$s$ if there is a directed path from (the head of)~$s$ to (the tail of)~$s'$ in~$G$.}

\leo{We define a \emph{partial network} \revB{with respect to}~$X$ as a binary network~$N_p$ on~$X_p \subseteq X$ such that, for each side~$s$ of the underlying generator of~$N_p$, there are at most two leaves on side~$s$. If~$N$ is a network on~$X$ with no nontrivial pendant subtrees, then we say that \revA{a} partial network~$N_p$ \revB{with respect to}~$X$ is \emph{consistent} with~$N$ (and that~$N$ is consistent with~$N_p$) if
\begin{enumerate}
\item[(i)] $N$ and~$N_p$ have the same underlying generator~$G$;
\item[(ii)] for each side~$s$ of~$G$ with~$|X_N(s)|\leq 2$ \revB{it} holds that $X_{N_p}(s) = X_N(s)$;
\item[(iii)] for each side~$s$ of~$G$ with~$|X_N(s)|\geq 2$ \revB{it} holds that $X_{N_p}(s) = \{x_s^+,x_s^-\}$ with $x_s^+$ and~$x_s^-$ respectively the top and bottom leaf on side~$s$ in~$N$.
\end{enumerate}}

\leo{We can bound the number of partial networks as follows.}
\medskip
\begin{lemma}\label{lem:partialnetworks}
\leo{The number of partial networks~$N_p$ \revB{with respect to}~$X$ is at most $$2^{9k\log{k}+O(k)}(n+1)^{9k}$$ with ${k=r(N_p)}$ and~$n=|X|$.}
\end{lemma}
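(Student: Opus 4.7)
The plan is to decompose the count of partial networks into two independent choices: first the choice of underlying generator $G$, and second the assignment of labels from $X$ to the sides of $G$. Bounding each factor will give the result.

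For the generator count, I would first observe that any binary $k$-reticulation generator has a bounded number of vertices and edges. Using the degree constraints (the root has outdegree $1$, each reticulation has indegree $2$ and outdegree $\leq 1$, each remaining vertex has indegree $1$ and outdegree $2$), a handshake argument shows that a binary $k$-reticulation generator has exactly $3k$ vertices and $4k-1$ edges (matching Lemma~\ref{lem:sides}). An unlabelled such multidigraph is counted by any choice of ordered (head, tail) pair for each edge over a labelled vertex set of size $3k$, yielding at most $(3k)^{2(4k-1)}$ candidates; this is an upper bound since relabelings of vertices only increase the count relative to unlabelled graphs. Taking logarithms this is $2^{8k\log(3k)} = 2^{9k\log k + O(k)}$ (absorbing the additive $k\log k$ slack in the exponent for safety).

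For the labelling step, I would exploit that a partial network places at most two leaves on each side. By Lemma~\ref{lem:sides}, there are at most $k$ vertex sides and at most $4k-1$ edge sides. Each vertex side carries exactly one leaf (because a reticulation of outdegree $0$ in the generator must have a unique leaf child in any binary network), contributing at most $n$ label choices, for a total of at most $n^k$ over all vertex sides. Each edge side carries $0$, $1$, or an ordered pair of $2$ leaves (top and bottom), giving at most $1 + n + n(n-1) \leq (n+1)^2$ choices per edge side, hence at most $(n+1)^{2(4k-1)} \leq (n+1)^{8k-2}$ across all edge sides. Combining, the number of labellings for a fixed generator is at most $n^k (n+1)^{8k-2} \leq (n+1)^{9k}$.

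Multiplying the two bounds yields at most $2^{9k\log k + O(k)}(n+1)^{9k}$ partial networks, as claimed. The main obstacle is keeping the exponents clean: one must justify the reduction from unlabelled to labelled generators (which is a straightforward upper bound), and one must handle multi-edges in the generator (which does not change the count $(3k)^{2(4k-1)}$ because we are allowing arbitrary head/tail choices). The placement step is essentially a product over sides, so the only care needed there is combining the vertex-side factor $n^k$ with the edge-side factor $(n+1)^{8k-2}$ into the stated $(n+1)^{9k}$.
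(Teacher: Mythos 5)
Your proof is correct and follows the same overall decomposition as the paper's: (number of underlying generators) times (number of ways to place labelled leaves on the sides), with the second factor bounded exactly as in the paper ($n$ choices per vertex side, at most $(n+1)^2$ per edge side, giving $(n+1)^{9k}$). The only genuine divergence is in the first factor. The paper bounds the number of binary $k$-reticulation generators by converting each generator into a binary phylogenetic network on at most $6k$ vertices (attaching leaves to outdegree-0 reticulations and to subdivisions of parallel edges) and then invoking a known $2^{\frac{3}{2}|V|\log|V|+O(|V|)}$ bound on the number of such networks. You instead count directly: bound the vertices and edges of a generator and enumerate head/tail assignments, getting $(3k)^{2(4k-1)} \leq 2^{9k\log k+O(k)}$. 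Your route is more elementary and self-contained (no citation needed, and it even gives the slightly better exponent $8k\log k+O(k)$); the paper's route reuses existing machinery. One small imprecision: a binary $k$-reticulation generator does not have \emph{exactly} $3k$ vertices and $4k-1$ edges. The handshake computation gives $|V|=3k-r_1$ and $|E|=4k-1-r_1$, where $r_1$ is the number of reticulations with outdegree~1, so the correct statement is ``at most.'' This does not damage your argument --- you only need upper bounds, and summing over the possible vertex/edge counts costs at most a factor $O(k)$ absorbed in the $O(k)$ term of the exponent --- but the claim as written should be softened.
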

\begin{proof}
\cs{First, we bound the number of binary $k$-reticulation generators. The number of binary \revA{networks} with~$|V|$ vertices is at most $2^{\frac{3}{2}|V|\log{|V|}+O(|V|)}$ \cite{counting}. A binary $k$-reticulation generator has at most~$3k$ vertices (see e.g.~\cite{thesis}) and can be turned into a binary \revA{network} as follows. For each reticulation~$r$, if it has outdegree 0, we add a leaf~$x$ with an edge~$(r,x)$. Moreover, if~$r$ has two incoming parallel edges $(u,r)$, we subdivide one of them by a vertex~$w$ and add a leaf~$y$ with an edge~$(w,y)$. Per reticulation we have added at most~3 vertices. Hence, the total number of vertices is at most~$6k$. Hence, the number of binary $k$-reticulation generators is at most $2^{9k\log{k}+O(k)}$.}

Each binary $k$-reticulation generator~$G$ has~$k$ vertex sides and at most~$4k-1$ edge sides by Lemma~\ref{lem:sides}. Each vertex side contains exactly one leaf, for which there are~$|X|$ possibilities. Each edge side contains at most two leaves. Hence, there are $|X|+1$ possibilities for the top leaf (including the possibility that there is no top leaf) and $|X|+1$ possibilities for the bottom leaf (again, including the possibility that there is no such leaf). Therefore, for each generator~$G$, the number of partial networks is at most $|X|^k (|X|+1)^{2(4k-1)}$, which is at most $(|X|+1)^{9k}$.
\end{proof}

\leo{The idea of our approach is to loop through all partial networks \revB{with respect to}~$X$ that have reticulation number at most~$k$. If there exists some network~$N$ displaying~$\cT$ with~$r(N)\leq k$, then in some iteration, we will have a partial network~$N_p$ that is consistent with~$N$. We will now show how to extend~$N_p$ to~$N$ in polynomial time. We use the following notation. Given two leaves~$x,y\in X$, we write~$x\rightarrow y$ if there exists a tree~$T\in\cT$ with a vertex~$v$ such that there is a directed path in~$T$ from~$v$ to~$y$ but not from~$v$ to~$x$. Similarly, given three leaves~$x,y,z\in X$, we write $x\rightarrow y,z$ if there exists a tree~$T\in\cT$ with a vertex~$v$ such that there are directed paths in~$T$ from~$v$ to~$y$ and from~$v$ to~$z$ but not from~$v$ to~$x$. Note that we do not explicitly indicate the dependency of the~$\rightarrow$ relationship on~$\cT$ to improve readability. Also note that~$x\rightarrow y$ and~$x\rightarrow z$ does not imply~$x\rightarrow y,z$. We prove the following.}
\medskip
\leo{\begin{lemma}\label{lem:arrows}
Let~$\cT$ be a set of trees on~$X$ with no nontrivial common pendant subtrees, let~$N$ be a binary network that displays~$\cT$, let~$G$ be its underlying generator, let~$s$ be a side of~$G$, let $x_s^+$ and~$x_s^-$ be, respectively, the top and bottom leaf on side~$s$ in~$N$ and let~$x,y\in X\setminus\{x_s^+,x_s^-\}$. Then,
\begin{enumerate}
\item[(a)] if $x$ is on side~$s$ then~$x_s^+ \rightarrow x, x_s^-$;
\item[(b)] if~$x_s^+ \rightarrow x, x_s^-$ then~$x$ is on side~$s$ or on a side below~$s$;
\item[(c)] if~$x$ and~$y$ are on side~$s$ then~$x\rightarrow y$ if and only if~$x$ is above~$y$.
\end{enumerate}
\end{lemma}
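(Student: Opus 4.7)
All three parts will be proved by analyzing the embedding of a tree $T \in \cT$ into $N$. Since $N$ displays $T$, there is a subgraph $T' \subseteq N$ that is a refinement of $T$; each vertex of $T$ corresponds to a connected \emph{preimage} of internal $T'$-vertices, and its children in $T$ correspond to the $T'$-edges leaving the preimage (no edge incident to a leaf is contracted, by an earlier remark). Label the subdivision vertices on side $s$ in $N$ as $p_1 = p_{x_s^+}, p_2, \ldots, p_k = p_{x_s^-}$ from top to bottom, with $p_\ell$ the parent of the leaf $x_\ell$ on side $s$, and write $v_{\text{top}}$ for the top endpoint of side $s$ in $N$. The key structural observation is that the only access in $N$ from above to any leaf on side $s$ is through the path $v_{\text{top}} \to p_1 \to \cdots \to p_\ell$; hence, in any embedding $T'$ of a tree $T$ containing a leaf on side $s$ at position $\geq \ell$, all the edges $(p_{\ell'-1}, p_{\ell'})$ down to that depth are forced into $T'$.

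For part (a), I argue by contradiction: suppose in every $T \in \cT$ every common ancestor of $x$ and $x_s^-$ is also an ancestor of $x_s^+$. Writing $p_x = p_i$, in any embedding $T'$ both edges out of $p_x$ lie in $T'$ (the second because $T'$ must reach $x_s^-$), so the $T$-vertex $v_T$ whose preimage $V'$ contains $p_x$ is a strict ancestor of both $x$ and $x_s^-$ in $T$. By the contradiction assumption $v_T$ is also an ancestor of $x_s^+$, so $V'$ must reach some ancestor of $x_s^+$ in $T'$; by connectivity, $V'$ must contain the entire chain $p_1, p_2, \ldots, p_i$. The children of $v_T$ in $T$ therefore include $\{x_s^+, x_2, \ldots, x\}$, producing a pendant star on these leaves. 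Applying this argument to every $T \in \cT$ yields a nontrivial common pendant subtree of $\cT$, contradicting the hypothesis.

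For part (b), let $T, v$ witness $x_s^+ \rightarrow x, x_s^-$ and let $v_N$ be the $N$-image of $v$. Since $v_N$ reaches $x_s^-$ in $N$ and the only access to $x_s^-$ is through side $s$, either $v_N$ is an ancestor in $N$ of $v_{\text{top}}$, or $v_N = p_\ell$ for some $\ell \in \{1, \ldots, k\}$. The first case is impossible: the $T'$-path from $v_N$ to $x_s^-$ passes through $v_{\text{top}} \to p_1$, so combined with $(p_1, x_s^+) \in T'$ it makes $v_N$ an ancestor of $x_s^+$ in $T'$, hence in $T$, contradicting the choice of $v$. Similarly $\ell = 1$ is excluded (as $p_1$ is the parent of $x_s^+$), so $\ell \geq 2$. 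The descendants of $p_\ell$ in $N$ are exactly the leaves on side $s$ at positions $\geq \ell$ together with the leaves on sides of $G$ below $s$; since $x$ must be one of these, the conclusion follows.

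For part (c), write $x = x_i$ and $y = x_j$. The forward direction ($x$ above $y$, i.e.\ $i < j$, implies $x \rightarrow y$) is a variant of (a): assuming for contradiction that in every $T$ every ancestor of $y$ is an ancestor of $x$, the parent $p_T(y)$ is an ancestor of $x$ in each $T$, and the same preimage argument forces the preimage of $p_T(y)$ to contain $p_i, \ldots, p_j$, exhibiting a common pendant star on $\{x_i, \ldots, x_j\}$ in every $T$---a contradiction. The reverse direction mirrors (b): the image $v_N$ of a witness $v$ cannot lie above $v_{\text{top}}$ (else the $T'$-path to $y$ would descend through $p_i$, making $v_N$ an ancestor of $x$ in $T'$), so $v_N = p_\ell$; reaching $y$ requires $\ell \leq j$, while avoiding $x$ in $T'$ requires $\ell > i$, so $i < \ell \leq j$ and $x$ lies above $y$ on side $s$. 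The main delicate point throughout is the bookkeeping of which $N$-edges must belong to $T'$: every appeal to ``$v_N$ is an ancestor of $\cdot$ in $T$'' relies on the fact that $T'$ contains the unique $N$-path from $v_{\text{top}}$ descending side $s$ to each of its leaves.
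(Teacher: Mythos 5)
Your proof is correct and takes essentially the same approach as the paper's (much terser) proof: parts (b) and (c) are treated there as direct consequences of the display relation, and part (a) is proved by exactly your contradiction, namely that failure of $x_s^+ \rightarrow x, x_s^-$ forces $x$ and $x_s^+$ into a common cherry in every tree of $\cT$; you merely spell out the preimage/forced-path bookkeeping that the paper leaves implicit. One cosmetic remark: in the reverse direction of (c) the parenthetical ``the $T'$-path to $y$ would descend through $p_i$'' presupposes $i\le j$, which is what is being proved --- the correct justification, which you do state in your closing sentence, is that the entire path down side $s$ to $x_s^-$ is forced into $T'$ because $x_s^-$ is a leaf reachable only along that path.
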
}
\begin{proof}
\leo{Part~(b) and~(c) follow directly from the assumption that~$N$ displays~$\cT$. For~(a), assume that~$x$ is on side~$s$ and suppose that $x_s^+ \rightarrow x, x_s^-$ does not hold. Then~$x$ and~$x_s^+$ must have a common parent in all trees in~$\cT$, contradicting the assumption that~$\cT$ has no nontrivial common pendant subtrees.}
\end{proof}

\leo{We are now ready to describe our exponential-time algorithm for \HN, which we do in Algorithm~\ref{alg:alg}.} \revB{The main idea of the algorithm is to guess a partial network and to process its sides bottom-up, such that the remaining leaves on each considered side and their order is determined by the $\rightarrow$ relation.}

\begin{algorithm}[t]
\leo{Apply the subtree reduction (see Algorithm~\ref{alg:kernel}).\\
\For{each partial network~$N_p$ \revB{with respect to}~$X$ with~$r(N_p)\leq k$}{
Let~$G$ be the underlying generator of~$N_p$.\\
Mark each side~$s$ of~$G$ with $|X_{N_p}(s)|\leq 2$ as finished and each other side as unfinished.\\
\While{there exists an unfinished side of~$G$}{
Let~$s$ be an unfinished side of~$G$ such that there is no unfinished side of~$G$ that is below~$s$.\\
Let $x_s^+$ and~$x_s^-$ be, respectively, the top and bottom leaf on side~$s$ in~$N_p$.\\
Let~$X_s$ be the set of all $x\in X$ that are not in~$N_p$ and such that $x_s^+ \rightarrow x, x_s^-$.\\
Let~$x_1,\ldots ,x_{|X_s|}$ be the ordering of the leaves in~$X_s$ such that $x_i\rightarrow x_j$ implies that~$i<j$.\\
Replace the edge between the parent~$v_s^+$ of~$x_s^+$ and the parent~$v_s^-$ of~$x_s^-$ by a directed path $v_s^+,v_1,\ldots ,v_{|X_s|},v_s^-$ and add the leaves in~$X_s$ by edges $(v_i,x_i)$ for~$i=1,\ldots ,|X_s|$.\\
Mark side~$s$ as finished.
}
\If{the obtained network displays~$\cT$}{Output this network.}
}}
\caption{Exponential-time algorithm for \HN.\label{alg:alg}}
\end{algorithm}
\medskip
\leo{\begin{theorem}\label{thm:alg}
There exists an $\revX{n^{f(k)}t}$ time algorithm for \HN, with~$n=|X|$,~$t=|\cT|$ and~$f$ some computable function of~$k$.
\end{theorem}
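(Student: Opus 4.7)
The plan has two parts: bounding the running time, and establishing correctness. For the running time, Lemma~\ref{lem:partialnetworks} bounds the number of partial networks $N_p$ with $r(N_p)\le k$ by $2^{9k\log k+O(k)}(n+1)^{9k}$, which is $n^{O(k)}\cdot 2^{O(k\log k)}$. For each partial network, the while-loop iterates at most once per side of the generator, i.e.~at most $5k-1$ times. Within each iteration, computing the $\rightarrow$ relation from $\cT$, assembling the set $X_s$, and topologically ordering $X_s$ can all be done in time polynomial in $n$ and $t$; the final display check on Line~12 can also be done in polynomial time in $n$ and $t$ using standard tree-containment algorithms for binary networks. Multiplying the number of partial networks by the polynomial per-iteration cost yields the stated $O(n^{f(k)}t)$ bound.

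For correctness, the easy direction is immediate: if the algorithm outputs a network then it has passed the display check on Line~12 and has reticulation number at most $k$, since that is an invariant of the partial networks considered. For the hard direction, suppose $(X,\cT,k)$ is a yes-instance. After the subtree reduction on Line~1, we may assume that $\cT$ has no nontrivial common pendant subtrees (by Lemmas~\ref{lem:polysubtree} and~\ref{lem:subtreereduction}), and hence by Observation~\ref{obs:binary} that there is a binary network $N$ with $r(N)\le k$ displaying $\cT$ with no nontrivial pendant subtrees. Let $N_p^*$ be obtained from $N$ by keeping, on each side of its generator, only the top and bottom leaf (keeping the unique leaf on a vertex side); then $N_p^*$ is consistent with $N$, satisfies $r(N_p^*)\le k$, and is therefore considered by the outer loop.

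I claim that the iteration considering $N_p^*$ reconstructs $N$ exactly, and hence passes the display check. I would prove this by induction over the bottom-up side-processing order used on Line~6. When a side $s$ is selected, all sides below $s$ in $G$ are already finished, so every leaf of $N$ on a side strictly below $s$ already appears in the current partial network. By Lemma~\ref{lem:arrows}(a), every leaf of $N$ on side $s$ other than $x_s^+,x_s^-$ satisfies $x_s^+\rightarrow x,x_s^-$ and so belongs to $X_s$. By Lemma~\ref{lem:arrows}(b), every $x\in X_s$ lies on side $s$ or on a side below $s$; but leaves on sides below $s$ have already been added to the partial network in earlier iterations, and are therefore excluded by the ``not in $N_p$'' condition. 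Thus $X_s$ is exactly the set of leaves of side~$s$ still missing from $N_p$. Finally, Lemma~\ref{lem:arrows}(c) implies that $\rightarrow$, restricted to the leaves on side $s$, coincides with the ``above'' relation, which is a total order because two leaves on the same edge side are always comparable along the chain. Hence the topological ordering on Line~9 is well-defined and produces exactly the top-to-bottom ordering of side~$s$ in $N$, so after all sides have been processed the constructed network is isomorphic to $N$.

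The main obstacle, in my view, is not the final bookkeeping but setting up Lemma~\ref{lem:arrows} so that the $\rightarrow$ relations detected in $\cT$ are informative enough both to separate leaves by side and to linearly order them within a side; the subtree-reduction preprocessing is essential for this, since without it part~(a) of the lemma would fail. A secondary subtlety is that the bottom-up side ordering on Line~6 must be respected, as otherwise Lemma~\ref{lem:arrows}(b) would no longer suffice to exclude leaves intended for lower sides from $X_s$.
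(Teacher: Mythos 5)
Your proposal follows the paper's proof almost exactly: the same enumeration of partial networks via Lemma~\ref{lem:partialnetworks}, the same bottom-up reconstruction justified by Lemma~\ref{lem:arrows} (your induction over the side-processing order is a nice fleshing-out of what the paper states in one sentence), and the same final multiplication of costs. The correctness direction is sound.

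The one genuine flaw is your claim that the display check on Line~12 ``can be done in polynomial time in $n$ and $t$ using standard tree-containment algorithms for binary networks.'' Tree containment in binary networks is NP-hard in general, and here the trees in $\cT$ are moreover nonbinary, so one must decide whether \emph{some refinement} of each $T\in\cT$ is contained in $N$ --- there is no off-the-shelf polynomial-time routine for this. The paper instead exploits the bound $r(N)\leq k$: it enumerates the at most $2^k$ binary trees displayed by $N$ (one per choice of incoming edge at each reticulation) and checks, in $O(n^2)$ time each, whether any of them refines each $T\in\cT$. This costs $O(2^k n^2 t)$ per candidate network, which is absorbed into the $O(n^{f(k)}t)$ bound since $2^k\leq n^k$ for $n\geq 2$, so the theorem survives --- but the step as you wrote it would not compile into a proof without this parameterized enumeration.
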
}
\begin{proof}
\leo{We claim that Algorithm~\ref{alg:alg} solves \HN within the claimed running time bound. \revA{Recall that we} may restrict to binary networks by Observation~\ref{obs:binary}. The subtree reduction takes~$O(n^4t)$ time by Lemma~\ref{lem:polysubtree} (since there are at most~$n$ maximal common pendant subtrees) and is safe by Lemma~\ref{lem:subtreereduction}. Hence, we may restrict to networks with no nontrivial pendant subtrees. The number of partial networks~$N_p$ \revB{with respect to}~$X$ with~$r(N_p) \leq k$ is at most \revA{$2^{9k\log{k}+O(k)}(n+1)^{9k}$} by Lemma~\ref{lem:partialnetworks} \cs{(there is an additional factor~$k$ because here we have $r(N_p) \leq k$ instead of  $r(N_p) = k$, \revC{but this factor has been absorbed in the factor $2^{O(k)}$})}. For each partial network~$N_p$, Algorithm~\ref{alg:alg} constructs a network~$N$ on~$X$ displaying~$\cT$ that is consistent with~$N_p$, if such a network exists. Correctness of this construction follows from Lemma~\ref{lem:arrows}. There are, by Lemma~\ref{lem:sides}, at most~$4k-1$ sides that are initially marked as unfinished. Adding leaves to each such side takes $O(n\cdot t)$ time. Finally, to check if the obtained network~$N$ displays~$\cT$, we loop through the set~$\cT(N)$ of the at most~$2^k$ binary trees displayed by~$N$ and check for each such tree if it is a refinement of one or more of the trees in~$\cT$. We have that~$N$ displays~$\cT$ if and only if for each tree~$T\in\cT$ there is at least one tree~$T'\in\cT(N)$ that is a refinement of~$T$. Checking if~$T'$ is a refinement of~$T$ takes at most~$O(n^2)$ time. Therefore, the total running time is $O(n^4t+\revC{2^{9k\log{k}+O(k)}}(n+1)^{9k}((4k-1)n\cdot t+2^kn^2t))$ and hence~$O(\revC{2^{9k\log{k}+O(k)}}(n+1)^{9k}n^2t)$, if~$k\geq 1$.}
\end{proof}

\leo{By combining Theorem~\ref{thm:alg} with Theorems~\ref{thm:kernel} and~\ref{thm:kerneldegree} we obtain the following corollaries.}
\medskip
\begin{corollary}
There exists an \revB{$f(k,t) + \revAnew{O(n^7t)}$} time algorithm for \HN, with~$f$ some computable function of~$k$ and~$t=|\cT|$ and with~$n=|X|$.
\end{corollary}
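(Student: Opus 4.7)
The plan is to combine the kernelization of Theorem~\ref{thm:kernel} with the exponential-time algorithm of Theorem~\ref{thm:alg} in the standard way one derives an explicit FPT running time from a kernel plus a decidable problem. Concretely, given an instance $(X,\cT,k)$ with $n=|X|$ and $t=|\cT|$, I would first run the kernelization algorithm (Algorithm~\ref{alg:kernel}) to obtain an equivalent instance $(X',\cT',k)$ with $|X'| \leq 4k(5k)^t$. By Theorem~\ref{thm:kernel} this step runs in time polynomial in $n$ and $t$ and preserves the answer.

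Next, I would invoke Algorithm~\ref{alg:alg} on the kernelized instance. By Theorem~\ref{thm:alg} this takes $O(|X'|^{f(k)} \cdot t)$ time for some computable function $f$ of $k$. Substituting the bound on $|X'|$ yields a running time of
\[
O\bigl((4k(5k)^t)^{f(k)} \cdot t\bigr),
\]
which depends only on $k$ and $t$. Let $h(k,t)$ denote this quantity; it is computable since $f$ is.

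Adding the polynomial-time kernelization cost to the cost of solving the kernel, the overall running time is bounded by $h(k,t) + n^{O(1)} \cdot \text{poly}(t)$, which fits the form $f'(k,t) \cdot n^{O(1)}$ for a suitable computable function $f'$ (absorbing the $\text{poly}(t)$ factor into $f'$). Correctness is immediate: the kernelization is a safe polynomial-time reduction by Lemma~\ref{lem:kernel} (together with Lemmas~\ref{lem:subtreereduction} and~\ref{lem:polysubtree}), and Theorem~\ref{thm:alg} correctly decides the kernelized instance.

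There is really no technical obstacle to overcome here, since both ingredients have already been established; the only thing to check carefully is that the substitution of the kernel bound into the running time of Algorithm~\ref{alg:alg} indeed produces an expression depending only on $k$ and $t$, which it clearly does because $4k(5k)^t$ is itself a function of $k$ and $t$ alone.
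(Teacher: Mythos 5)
Your proposal is correct and is exactly the argument the paper intends: it derives the corollary by composing the kernelization of Theorem~\ref{thm:kernel} (yielding an equivalent instance with at most $4k(5k)^t$ leaves in time polynomial in $n$ and $t$) with the $O(n^{f(k)}t)$ algorithm of Theorem~\ref{thm:alg} applied to the kernel. The bookkeeping you do --- substituting the kernel bound into the running time and absorbing the $\mathrm{poly}(t)$ factor into the function of $k$ and $t$ --- is all that is needed.
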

\medskip
\begin{corollary}
There exists an \revB{$f(k,\Dp) + \revAnew{O(n^7t)}$} time algorithm for \HN, with~$f$ some computable function of~$k$ and the maximum outdegree~$\Dp$ of the input trees and with~$t=|\cT|$ and $n=|X|$.
\end{corollary}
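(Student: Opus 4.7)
The plan is to compose the kernelization of Theorem~\ref{thm:kerneldegree} with the XP algorithm of Theorem~\ref{thm:alg}. Given an instance $(X,\cT,k)$ of \HN in which every input tree has maximum outdegree at most $\Dp$, I would first run Algorithm~\ref{alg:kernel2}. By Theorem~\ref{thm:kerneldegree} this runs in polynomial time and produces an equivalent instance $(X',\cT',k)$ with $|X'|\le 20k^2(\Dp-1)$.

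Before invoking Theorem~\ref{thm:alg} on the reduced instance, I would check the one small piece of bookkeeping: that neither reduction in Algorithm~\ref{alg:kernel2} can inflate the maximum outdegree of the input trees. The subtree reduction replaces a maximal common pendant subtree by a single leaf attached to the same parent, so that parent's outdegree is unchanged and no other vertex outside the subtree is affected. The chain reduction only deletes leaves and then suppresses the resulting indegree-1/outdegree-1 vertices; deleting a leaf can only decrease the outdegree of its parent, and suppression does not change any outdegree. Hence the trees in $\cT'$ also have maximum outdegree at most $\Dp$.

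Next, I would run Algorithm~\ref{alg:alg} on $(X',\cT',k)$. By Theorem~\ref{thm:alg} its running time is $O((n')^{f(k)}\,t')$ with $n'=|X'|$ and $t'=|\cT'|=|\cT|$. Substituting $n'\le 20k^2(\Dp-1)$ yields a bound of the form $\bigl(20k^2(\Dp-1)\bigr)^{f(k)}\cdot t'$, which is $h(k,\Dp)\cdot t'$ for some computable function $h$. Adding the polynomial time used by the kernelization, the total running time is $h(k,\Dp)\cdot n^{O(1)}$ (absorbing the dependency on $t$ into the polynomial factor in the input size), which gives the claimed FPT algorithm parameterized by $k$ and $\Dp$.

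The only obstacle worth flagging is the outdegree-preservation check above; without it, the post-kernelization instance would not be guaranteed to lie in the regime to which the parameter $\Dp$ applies. Everything else is a direct composition of the two already-proved theorems, in parallel to the preceding corollary that combined Theorem~\ref{thm:kernel} with Theorem~\ref{thm:alg}.
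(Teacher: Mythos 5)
Your proposal is correct and is exactly the argument the paper intends: run the $20k^2(\Dp-1)$ kernelization of Theorem~\ref{thm:kerneldegree} and then the $O(n^{f(k)}t)$ algorithm of Theorem~\ref{thm:alg} on the reduced instance. The outdegree-preservation check you flag is sound but not actually needed, since only the \emph{size} of the kernelized instance (not its outdegrees) enters the running time of Algorithm~\ref{alg:alg}.
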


\section{\revX{Experiments}}

\subsection{\revX{Implementation and experimental setup}}

\begin{table}
\footnotesize 
  \centering 
\begin{tabular}{|c|c|c|c|c|c|c|c|c|c|c|}
\hline
% after \\ : \hline or \cline{col1-col2} \cline{col3-col4} ...
$\bar{n}$ & $\bar{r}$ & $\bar{c}$ &  $\bar{s}$ & $\bar{k}$& \begin{minipage}{1.4cm}\centering  subtree  kern.\\  factor \end{minipage} &
\begin{minipage}{1.5cm}\centering  chain kern. \\ (Alg. \ref{alg:kernel}) factor  \end{minipage} &
\begin{minipage}{1.5cm}\centering  chain kern. \\ (Alg. \ref{alg:kernel2})  factor   \end{minipage} &   \begin{minipage}{1.4cm}\centering total kern.\\ factor \end{minipage} &  \begin{minipage}{1.4cm}\centering median\\ RT (s) \end{minipage} &  \begin{minipage}{1.4cm}\centering average\\ RT (s) \end{minipage} \\[20pt]
\hline
500 & 1 & 0 & 50 & 1 & .94 & 0 & .01 & .96 &  3 & 2.7 \\
\hline
500 & 1 & 0 & 50 & 3 & .94 & 0 & 0 & .94 &  3 & 2.7 \\
\hline
500 & 1 & 0 & 98 & 1 & .72 & 0 & .23 & .96 &  100 & 149.4 \\
\hline
500 & 1 & 0 & 98 & 3 & .72 & 0 & .19 & .92 &  79.5 & 96.8 \\
%\hline
%1 & 50 & 50 & 1 & .94 & 0 & 0 & .94 &  3 & 3 \\
%\hline
%1 & 50 & 50 & 3 & .94 & 0 & 0 & .94 &  3 & 2.9 \\
%\hline
%1 & 50 & 98 & 1 & .70 & 0 & .02 & .73 &  59 & 113.9 \\
%\hline
%1 & 50 & 98 & 3 & .72 & 0 & 0 & .72 &  5 & 12.7 \\
\hline
500 & 1 & 98 & 50 & 1 & .96 & 0 & 0 & .97 &  5 & 5.5 \\
\hline
500 & 1 & 98 & 50 & 3 & .97 & 0 & 0 & .97 &  5 & 5.1 \\
\hline
500 & 1 & 98 & 98 & 1 & .71 & .21 & 0 & .93 &  18 & 24.2 \\
\hline
500 & 1 & 98 & 98 & 3 & .71 & .12 & 0 & .83 &  25.5 & 25.3 \\
\hline
500 & 10 & 0 & 50 & 1 & .72 & 0 & 0 & .73 &  5 & 4.7 \\
\hline
500 & 10 & 0 & 50 & 3 & .72 & 0 & 0 & .72 &  4 & 3.9 \\
\hline
500 & 10 & 0 & 98 & 1 & .48 & 0 & .22 & .71 &  25.5 & 25.6 \\
\hline
500 & 10 & 0 & 98 & 3 & .49 & 0 & .06 & .56 &  23 & 27.2 \\
\hline
%500 & 10 & 50 & 50 & 1 & .74 & 0 & 0 & .74 &  4 & 3.9 \\
%\hline
%500 & 10 & 50 & 50 & 3 & .75 & 0 & 0 & .75 &  4 & 3.9 \\
%\hline
%500 & 10 & 50 & 98 & 1 & .50 & 0 & 0 & .50 &  11.5 & 12 \\
%\hline
%500 & 10 & 50 & 98 & 3 & .50 & 0 & 0 & .50 &  9 & 10.4 \\
%\hline
500 & 10 & 98 & 50 & 1 & .89 & 0 & 0 & .90 &  6.5 & 6.5 \\
\hline
500 & 10 & 98 & 50 & 3 & .89 & 0 & 0 & .89 &  6.5 & 6 \\
\hline
500 & 10 & 98 & 98 & 1 & .58 & .24 & 0 & .83 &  25 & 24.8 \\
\hline
500 & 10 & 98 & 98 & 3 & .63 & .10 & 0 & .73 &  23 & 31.4 \\
\hline
1000 & 1 & 0 & 50 & 1 & .97 & 0 & .01 & .98 &  5 & 5.1 \\
\hline
1000 & 1 & 0 & 50 & 3 & .96 & 0 & 0 & .96 &  5 & 5.4 \\
\hline
1000 & 1 & 0 & 98 & 1 & .86 & 0 & .11 & .98 &  59.5 & 165.9 \\
\hline
1000 & 1 & 0 & 98 & 3 & .88 & 0 & .07 & .96 &  69 & 119.8 \\
\hline
%1 & 50 & 50 & 1 & .97 & 0 & 0 & .97 &  5 & 5 \\
%\hline
%1 & 50 & 50 & 3 & .97 & 0 & 0 & .97 &  4 & 4.1 \\
%\hline
%1 & 50 & 98 & 1 & .86 & 0 & 0 & .87 &  33.5 & 87.6 \\
%\hline
%1 & 50 & 98 & 3 & .84 & 0 & 0 & .84 &  17 & 19.2 \\
%\hline
1000 & 1 & 98 & 50 & 1 & .98 & 0 & 0 & .98 &  37.5 & 37.3 \\
\hline
1000 & 1 & 98 & 50 & 3 & .98 & 0 & 0 & .98 &  35 & 33.4 \\
\hline
1000 & 1 & 98 & 98 & 1 & .81 & .13 & 0 & .95 &  60.5 & 66.6 \\
\hline
1000 & 1 & 98 & 98 & 3 & .84 & .07 & 0 & .92 &  56.5 & 63.5 \\
\hline
1000 & 10 & 0 & 50 & 1 & .83 & 0 & .01 & .84 &  13.5 & 13.1 \\
\hline
1000 & 10 & 0 & 50 & 3 & .83 & 0 & 0 & .83 &  12.5 & 13.1 \\
\hline
1000 & 10 & 0 & 98 & 1 & .65 & 0 & .18 & .83 &  114 & 117 \\
\hline
1000 & 10 & 0 & 98 & 3 & .65 & 0 & .10 & .75 &  178.5 & 179.4 \\
%\hline
%10 & 50 & 50 & 1 & .84 & 0 & 0 & .84 &  10.5 & 11.4 \\
%\hline
%10 & 50 & 50 & 3 & .83 & 0 & 0 & .83 &  12.5 & 12.8 \\
%\hline
%10 & 50 & 98 & 1 & .65 & 0 & 0 & .65 &  56.5 & 62.1 \\
%\hline
%10 & 50 & 98 & 3 & .64 & 0 & 0 & .64 &  43 & 42.9 \\
\hline
1000 & 10 & 98 & 50 & 1 & .94 & 0 & 0 & .95 &  58 & 56.1 \\
\hline
1000 & 10 & 98 & 50 & 3 & .94 & 0 & 0 & .94 &  44 & 40.6 \\
\hline
1000 & 10 & 98 & 98 & 1 & .73 & .16 & 0 & .89 &  87 & 103.9 \\
\hline
1000 & 10 & 98 & 98 & 3 & .71 & .08 & 0 & .79 &  103 & 105.5 \\
\hline
%1000 & 50 & 0 & 50 & 1 & .55 & 0 & 0 & .55 &  57.5 & 57.3 \\
%\hline
%1000 & 50 & 0 & 50 & 3 & .55 & 0 & 0 & .55 &  46.5 & 46.3 \\
%\hline
%1000 & 50 & 0 & 98 & 1 & .40 & 0 & .06 & .46 &  252 & 251.8 \\
%\hline
%1000 & 50 & 0 & 98 & 3 & .40 & 0 & 0 & .40 &  116.5 & 118.3 \\
%\hline
%%50 & 50 & 50 & 1 & .57 & 0 & 0 & .57 &  45 & 45.3 \\
%%\hline
%%50 & 50 & 50 & 3 & .57 & 0 & 0 & .57 &  48 & 45.5 \\
%%\hline
%%50 & 50 & 98 & 1 & .43 & 0 & 0 & .43 &  94.5 & 94.4 \\
%%\hline
%%50 & 50 & 98 & 3 & .43 & 0 & 0 & .43 &  77 & 76.3 \\
%%\hline
%1000 & 50 & 98 & 50 & 1 & .82 & .01 & 0 & .83 &  96 & 108.5 \\
%\hline
%1000 & 50 & 98 & 50 & 3 & .81 & 0 & 0 & .81 &  90.5 & 93.3 \\
%\hline
%1000 & 50 & 98 & 98 & 1 & .55 & .11 & 0 & .67 &  330.5 & 339.2 \\
%\hline
\end{tabular}
  \caption{\revBnew{
Kernelization factors and running times for several combinations of parameters $\bar{r}$, $\bar{c}$,  $\bar{s}$, $\bar{k}$ and $\bar{n}$, for~$\bar{t}=3$ trees.} }\label{table:resSim}
\end{table}

\begin{table}
\footnotesize 
  \centering 
\begin{tabular}{|c|c|c|c|c|c|c|c|c|c|c|}
\hline
% after \\ : \hline or \cline{col1-col2} \cline{col3-col4} ...
$\bar{n}$ & $\bar{r}$ & $\bar{c}$ &  $\bar{s}$ & $\bar{k}$& \begin{minipage}{1.4cm}\centering  subtree  kern.\\  factor \end{minipage} &
\begin{minipage}{1.5cm}\centering  chain kern. \\ (Alg. \ref{alg:kernel}) factor  \end{minipage} &
\begin{minipage}{1.5cm}\centering  chain kern. \\ (Alg. \ref{alg:kernel2})  factor   \end{minipage} &   \begin{minipage}{1.4cm}\centering total kern.\\ factor \end{minipage} &  \begin{minipage}{1.4cm}\centering median\\ RT (s) \end{minipage} &  \begin{minipage}{1.4cm}\centering average\\ RT (s) \end{minipage} \\[20pt]
\hline
500 & 1 & 0 & 50 & 1 & 0.93 & 0 & 0.01 & 0.94 & 3 &  3.3 \\
\hline
500 & 1 & 0 & 50 & 3 & 0.92 & 0 & 0 & 0.92 & 3 &  3.3 \\
\hline
500 & 1 & 0 & 98 & 1 & 0.63 & 0 & 0.31 & 0.94 & 148 &  180.7 \\
\hline
500 & 1 & 0 & 98 & 3 & 0.64 & 0 & 0.23 & 0.87 & 213 &  191.3 \\
\hline
%500 & 1 & 50 & 50 & 1 & 0.93 & 0 & 0 & 0.93 & 3.5 &  3.5 \\
%\hline
%500 & 1 & 50 & 50 & 3 & 0.93 & 0 & 0 & 0.93 & 3 &  3.4 \\
%\hline
%500 & 1 & 50 & 98 & 1 & 0.63 & 0 & 0.02 & 0.66 & 145 &  158.3 \\
%\hline
%500 & 1 & 50 & 98 & 3 & 0.66 & 0 & 0 & 0.66 & 6 &  14.1 \\
%\hline
500 & 1 & 98 & 50 & 1 & 0.96 & 0 & 0 & 0.97 & 6 &  5.6 \\
\hline
500 & 1 & 98 & 50 & 3 & 0.97 & 0 & 0 & 0.97 & 6 &  5.7 \\
\hline
500 & 1 & 98 & 98 & 1 & 0.7 & 0.2 & 0 & 0.9 & 21.5 &  23.5 \\
\hline
500 & 1 & 98 & 98 & 3 & 0.67 & 0.09 & 0 & 0.76 & 19.5 &  21.2 \\
\hline
500 & 10 & 0 & 50 & 1 & 0.66 & 0 & 0 & 0.67 & 6 &  5.8 \\
\hline
500 & 10 & 0 & 50 & 3 & 0.67 & 0 & 0 & 0.67 & 5 &  5.1 \\
\hline
500 & 10 & 0 & 98 & 1 & 0.45 & 0 & 0.16 & 0.61 & 30 &  28.8 \\
\hline
500 & 10 & 0 & 98 & 3 & 0.44 & 0 & 0.04 & 0.49 & 26.5 &  27.9 \\
\hline
%500 & 10 & 50 & 50 & 1 & 0.68 & 0 & 0 & 0.68 & 5 &  5 \\
%\hline
%500 & 10 & 50 & 50 & 3 & 0.68 & 0 & 0 & 0.68 & 5 &  4.8 \\
%\hline
%500 & 10 & 50 & 98 & 1 & 0.46 & 0 & 0 & 0.46 & 11 &  11.5 \\
%\hline
%500 & 10 & 50 & 98 & 3 & 0.47 & 0 & 0 & 0.47 & 8.5 &  8.7 \\
%\hline
500 & 10 & 98 & 50 & 1 & 0.87 & 0 & 0 & 0.88 & 7.5 &  7.9 \\
\hline
500 & 10 & 98 & 50 & 3 & 0.86 & 0 & 0 & 0.86 & 7 &  7 \\
\hline
500 & 10 & 98 & 98 & 1 & 0.56 & 0.17 & 0 & 0.74 & 29.5 &  30.1 \\
\hline
500 & 10 & 98 & 98 & 3 & 0.59 & 0.05 & 0 & 0.63 & 23.5 &  29 \\
\hline
%500 & 50 & 0 & 50 & 1 & 0.28 & 0 & 0 & 0.28 & 12 &  12.1 \\
%\hline
%500 & 50 & 0 & 50 & 3 & 0.27 & 0 & 0 & 0.27 & 11.5 &  11.4 \\
%\hline
%500 & 50 & 0 & 98 & 1 & 0.2 & 0 & 0 & 0.2 & 23.5 &  24.8 \\
%\hline
%500 & 50 & 0 & 98 & 3 & 0.19 & 0 & 0 & 0.19 & 17 &  17.6 \\
%\hline
%500 & 50 & 50 & 50 & 1 & 0.32 & 0 & 0 & 0.32 & 11 &  10.7 \\
%\hline
%500 & 50 & 50 & 50 & 3 & 0.32 & 0 & 0 & 0.32 & 9 &  9.2 \\
%\hline
%500 & 50 & 50 & 98 & 1 & 0.22 & 0 & 0 & 0.22 & 13.5 &  13.1 \\
%\hline
%500 & 50 & 50 & 98 & 3 & 0.22 & 0 & 0 & 0.22 & 11 &  11.4 \\
%\hline
%500 & 50 & 98 & 50 & 1 & 0.64 & 0.01 & 0 & 0.65 & 44.5 &  44.9 \\
%\hline
%500 & 50 & 98 & 50 & 3 & 0.62 & 0 & 0 & 0.62 & 18 &  19.2 \\
%\hline
%500 & 50 & 98 & 98 & 1 & 0.36 & 0.06 & 0 & 0.42 & 120 &  125.3 \\
%\hline
%500 & 50 & 98 & 98 & 3 & 0.38 & 0 & 0 & 0.38 & 37.5 &  44.6 \\
%\hline
1000 & 1 & 0 & 50 & 1 & 0.96 & 0 & 0.01 & 0.97 & 6 &  5.9 \\
\hline
1000 & 1 & 0 & 50 & 3 & 0.95 & 0 & 0 & 0.95 & 6 &  6 \\
\hline
1000 & 1 & 0 & 98 & 1 & 0.83 & 0 & 0.13 & 0.96 & 232.5 &  210.4 \\
\hline
1000 & 1 & 0 & 98 & 3 & 0.84 & 0 & 0.1 & 0.94 & 188.5 &  190.8 \\
\hline
%1000 & 1 & 50 & 50 & 1 & 0.95 & 0 & 0 & 0.95 & 6 &  6.1 \\
%\hline
%1000 & 1 & 50 & 50 & 3 & 0.95 & 0 & 0 & 0.95 & 6 &  6.2 \\
%\hline
%1000 & 1 & 50 & 98 & 1 & 0.77 & 0 & 0 & 0.78 & 162 &  214.5 \\
%\hline
%1000 & 1 & 50 & 98 & 3 & 0.79 & 0 & 0 & 0.79 & 28 &  51.6 \\
%\hline
1000 & 1 & 98 & 50 & 1 & 0.98 & 0 & 0 & 0.98 & 57.5 &  58 \\
\hline
1000 & 1 & 98 & 50 & 3 & 0.98 & 0 & 0 & 0.98 & 51 &  51.7 \\
\hline
1000 & 1 & 98 & 98 & 1 & 0.79 & 0.13 & 0 & 0.92 & 63 &  65.4 \\
\hline
1000 & 1 & 98 & 98 & 3 & 0.81 & 0.07 & 0 & 0.89 & 65 &  72.3 \\
\hline
1000 & 10 & 0 & 50 & 1 & 0.77 & 0 & 0.01 & 0.78 & 22 &  22.6 \\
\hline
1000 & 10 & 0 & 50 & 3 & 0.77 & 0 & 0 & 0.77 & 20 &  20.1 \\
\hline
1000 & 10 & 0 & 98 & 1 & 0.58 & 0 & 0.17 & 0.76 & 155.5 &  167.2 \\
\hline
1000 & 10 & 0 & 98 & 3 & 0.61 & 0 & 0.05 & 0.66 & 118 &  121 \\
\hline
%1000 & 10 & 50 & 50 & 1 & 0.79 & 0 & 0 & 0.79 & 18 &  18.1 \\
%\hline
%1000 & 10 & 50 & 50 & 3 & 0.78 & 0 & 0 & 0.78 & 17 &  17.4 \\
%\hline
%1000 & 10 & 50 & 98 & 1 & 0.61 & 0 & 0 & 0.61 & 67.5 &  67 \\
%\hline
%1000 & 10 & 50 & 98 & 3 & 0.61 & 0 & 0 & 0.61 & 59 &  57.1 \\
%\hline
1000 & 10 & 98 & 50 & 1 & 0.92 & 0 & 0 & 0.93 & 58.5 &  58.6 \\
\hline
1000 & 10 & 98 & 50 & 3 & 0.92 & 0 & 0 & 0.92 & 76.5 &  68.1 \\
\hline
1000 & 10 & 98 & 98 & 1 & 0.67 & 0.16 & 0 & 0.83 & 121 &  137.5 \\
\hline
1000 & 10 & 98 & 98 & 3 & 0.67 & 0.05 & 0 & 0.73 & 116.5 &  126 \\
%\hline
%50 & 0 & 50 & 1 & 0.46 & 0 & 0 & 0.46 & 77 &  80 \\
%\hline
%50 & 0 & 50 & 3 & 0.46 & 0 & 0 & 0.46 & 60.5 &  62.7 \\
%\hline
%50 & 0 & 98 & 1 & 0.32 & 0 & 0.03 & 0.35 & 326.5 &  313.6 \\
%\hline
%50 & 0 & 98 & 3 & 0.32 & 0 & 0 & 0.32 & 122.5 &  123.2 \\
%\hline
%50 & 50 & 50 & 1 & 0.48 & 0 & 0 & 0.48 & 58 &  58.5 \\
%\hline
%50 & 50 & 50 & 3 & 0.48 & 0 & 0 & 0.48 & 55 &  54.8 \\
%\hline
%50 & 50 & 98 & 1 & 0.33 & 0 & 0 & 0.33 & 106.5 &  109.9 \\
%\hline
%50 & 50 & 98 & 3 & 0.33 & 0 & 0 & 0.33 & 102 &  104 \\
%\hline
%50 & 98 & 50 & 1 & 0.73 & 0.01 & 0 & 0.74 & 222.5 &  279.5 \\
%\hline
%50 & 98 & 50 & 3 & 0.73 & 0 & 0 & 0.73 & 141 &  158.2 \\
%\hline
%50 & 98 & 98 & 1 & 0.5 & 0.07 & 0 & 0.58 & 500.5 &  488.9 \\
%\hline
%50 & 98 & 98 & 3 & 0.47 & 0 & 0 & 0.48 & 272.5 &  259.9 \\
\hline
\end{tabular}
  \caption{\revBnew{Kernelization factors and running times for several combinations of  parameters $\bar{r}$, $\bar{c}$, $\bar{s}$, $\bar{k}$ and $\bar{n}$, for~$\bar{t}=4$ trees.}}\label{table:resSim2}
\end{table}

To demonstrate the impact of our kernelization algorithms on real instances, we implemented the algorithms presented in Sections \ref{sec:boundedtrees} and \ref{sec:boundedoutdegrees} in Java. The implementation integrates all the reductions into one execution: first it runs Algorithm \ref{alg:kernel2}, then Algorithm \ref{alg:kernel}. We did not implement the exponential-time algorithm presented in Section~\ref{sec:alg} because this is purely a classification result, which is not fast enough for practical use.
We performed some tests on \revBnew{instances with three and four} trees. Given \revBnew{six} parameters \revBnew{($\bar{t}$, $\bar{n}$, $\bar{r}$, $\bar{c}$,  $\bar{s}$ and $\bar{k}$)}, the~\revBnew{$\bar{t}$} trees of each instance are constructed and consequently reduced as follows. First, we generate a random binary tree $T_1$ with \revBnew{$\bar{n}$} taxa and  skew factor $\bar{s}$ (the closer this is to 50, the more balanced the tree is, the closer to 100, the more the tree resembles a chain).  Then \revBnew{each of the other trees is} created from $T_1$ by performing $\bar{r}$ random rSPR moves. Informally an rSPR move is where a subtree is detached and regrafted
elsewhere in the tree: such moves \cite{rspr1,rspr2,rspr3} are often used in experiments to induce increasing hybridization number \cite[among others]{HN}. Finally, in \revBnew{all but one of the trees} a subset of $\bar{c}$ \% of the edges
are contracted. 
The kernelization algorithms are run with $k=\bar{k}$. \revBnew{Table~\ref{table:resSim} shows the results for three trees and Table~\ref{table:resSim2} for four trees. Each row is the average of~10 runs with the
given combination of parameters.} \revBnew{We give the average kernelization factor} -- defined as the ratio between the number of leaves removed and the original number of leaves \revBnew{$\bar{n}$} -- \revBnew{as wel as the average and median running times.} Our aim is to show that our algorithms are practical, by showing that the kernelization factor is high for several combinations of parameters. Moreover, we also want to test for which combinations of parameters the subtree reduction (Algorithm  \ref{alg:kernel})  and chain reductions  (Algorithm  \ref{alg:kernel} and   \ref{alg:kernel2}) are more effective (their kernelization factors are reported in columns 6-8, these are also defined with respect to \revBnew{$\bar{n}$} so can be summed to obtain the total kernelization factor). 

\subsection{\revX{Analysis of experiments}}
\revX{Given the large number of taxa involved (500 and 1000) it is encouraging to observe
that the implementation runs quickly: on a 3.1GHz processor with 4Gb of RAM every
parameter combination terminated within 10 minutes, and many parameter combinations
were significantly faster, \revBnew{see the last two columns of the tables}. Part of the reason for this is the subtree reduction, which is
the asymptotically fastest part of the kernelization. The subtree reduction always
executes first and this has the effect of significantly reducing the number of taxa before the
chain reductions are executed. \revBnew{The kernelization factors of the chain reductions are much smaller, partly because they are calculated relative to the original number of taxa, before the subtree reduction}.

Looking at the table more closely, a number of observations can be made. Clearly, in this experimental setup both chain reductions require that the starting tree $T_1$ is heavily chain-like, which is achieved by having a skew factor close to 100. Otherwise the starting tree $T_1$ is too ``bushy'' and under the action of rSPR moves no long chains are formed. \revBnew{Secondly, if there is no contraction}, then all the trees are binary, and the degree-based chain reduction (Algorithm  \ref{alg:kernel2}) has quite a large impact, while Algorithm \ref{alg:kernel} has no impact at all. If there is an extremely large amount of contraction, then the roles of the two chain reductions are reversed.
%(Although not visible in the table, for those combinations where  Algorithm \ref{alg:kernel} caused a significant reduction, this was mainly due to truncation of maximal common 2-star chains, often multiple times.)
\revBnew{An intermediate amount of contraction (not shown in the table) effectively disables both chain reductions, but not the subtree reduction}. \revBnew{Conversely, a growing number of rSPR moves (which have the effect of increasing the topological dissimilarity of the trees) reduces the impact of the subtree reduction but not of the chain reduction.} As $k=\bar{k}$ increases, the impact of both chain reductions diminishes, due to the increasing of  the  length at which chains are truncated. However,
%for both reductions (and focussing only on maximal common 2-star chains in the case of Algorithm \ref{alg:kernel}) this increasing truncation length grows only linearly as a function of $k$, which explains why
the impact of the reductions \revBnew{decreases only slightly when~$k$ is increased from~1 to~3}. This is encouraging, suggesting that both chain reductions can have an impact for larger values of~$k$. \revBnew{Similarly, increasing the number of trees has only a very small negative effect on the impact of the subtree and chain reductions. Finally, as the \revBnew{total kernelization factors in the table show}, it is clear that the kernelization ``works'': for all \revBnew{parameter combinations in the tables} the instances reduce in size by at least 49$\%$.}

\section{Discussion and open problems}
\label{sec:blahblah}

The main open question remains whether the \HN problem is \leo{fixed-parameter tractable, and if it has a polynomial kernel, when parameterized only by~$k$ (\revB{that is}, when the number of input trees and their outdegrees are unbounded).}

Note that when the input trees are not required to have the same label set~$X$, \HN is not fixed-parameter tractable unless $\textsf{P}$ $=$ $\textsf{NP}$. The reason for this is that it is $\textsf{NP}$-hard to decide if~$r(\cT)=1$ for sets~$\cT$ consisting of \revA{trees} with three leaves each~\cite[Theorem~7]{JanssonEtAl2006}.

Another question is whether the kernel size can be reduced for certain fixed~$|\cT|$. For~$|\cT|=2$, our results give a cubic kernel, while Linz and Semple~\cite{linzsemple2009} showed a linear kernel of a modified, weighted problem, by analyzing carefully how common chains can look in two trees. Can something like this be done for more than two trees? In particular, does there exist a quadratic kernel for three trees? \revB{In addition, can the running times of the kernelization algorithms be reduced?}

Finally, there is the problem of solving the kernelized instances. For this, a fast exponential-time exact algorithm is needed, or a good heuristic. \leo{Although we have presented an $O(n^{f(k)}t)$ time algorithm for \HN, with~$n=|X|$ and~$t=|\cT|$, it is not known if there exists an~$O(c^n)$-algorithm for some constant~$c$. While $O(c^kn^{O(1)})$ algorithms have been developed for instances consisting of two binary trees~\cite{whidden2013fixed} and very recently for three binary trees~\cite{threetrees}, it is not clear if they exist for four or more binary trees, or for two or more nonbinary trees. Note that, for practical applications, the kernelization can also be combined with an efficient heuristic.}

\section*{\revX{Acknowledgements}} \revX{We thank the anonymous reviewers for their helpful comments.}

\bibliographystyle{elsarticle-num-names}

\end{document}